\renewcommand{\vec}[1]{\boldsymbol{#1}}
\newtheorem{thm}{Theorem}
\newtheorem{lem}{Lemma}
\newtheorem{defn}{Definition}
\begin{document}

\title{Differential Privacy-Based Online Allocations towards Integrating Blockchain and Edge Computing}

\author{Jianxiong Guo,
	Weili Wu,~\IEEEmembership{Senior Member,~IEEE}
	\thanks{J. Guo and W. Wu are with the Department
		of Computer Science, Erik Jonsson School of Engineering and Computer Science, Univerity of Texas at Dallas, Richardson, TX 75080, USA. \textit{(Corresponding author: Jianxiong Guo.)}
		
		E-mail: jianxiong.guo@utdallas.edu}
	\thanks{Manuscript received April 19, 2005; revised August 26, 2015.}}

\markboth{Journal of \LaTeX\ Class Files,~Vol.~14, No.~8, August~2015}%
{Shell \MakeLowercase{\textit{et al.}}: Bare Demo of IEEEtran.cls for IEEE Journals}

\maketitle

\begin{abstract}
	In recent years, the blockchain-based Internet of Things (IoT) has been researched and applied widely, where each IoT device can act as a node in the blockchain. However, these lightweight nodes usually do not have enough computing power to complete the consensus or other computing-required tasks. Edge computing network gives a platform to provide computing power to IoT devices. A fundamental problem is how to allocate limited edge servers to IoT devices in a highly untrustworthy environment. In a fair competition environment, the allocation mechanism should be online, truthful, and privacy safe. To address these three challenges, we propose an online multi-item double auction (MIDA) mechanism, where IoT devices are buyers and edge servers are sellers. In order to achieve the truthfulness, the participants' private information is at risk of being exposed by inference attack,which may lead to malicious manipulation of the market by adversaries. Then, we improve our MIDA mechanism based on differential privacy to protect sensitive information from being leaked. It interferes with the auction results slightly but guarantees privacy protection with high confidence. Besides, we upgrade our privacy-preserving MIDA mechanism such that adapting to more complex and realistic scenarios. In the end, the effectiveness and correctness of algorithms are evaluated and verified by theoretical analysis and numerical simulations.
\end{abstract}

\begin{IEEEkeywords}
	Internet of Things, Blockchain, Online double auction, Differential privacy, Truthfulness, Inference attack.	
\end{IEEEkeywords}

\IEEEpeerreviewmaketitle

\section{Introduction}
\IEEEPARstart{R}{ecently}, with the rapid development of 5G networks accompanied by higher speed and lower delay, the sensors distributed in our lives, such as mobile phones, cemera, and automobiles, can play a more important role. Therefore, Internet of Things (IoT) has become a hot topic, which connects the physical environment to the syberspare system. More and more people are paying attention to it in academia and industry. IoT can be used to achieve a variety of industrial applications, such as smart home, manufacturing, healthcare, and smart grid \cite{dai2019blockchain}. The traditional IoT paradigms are managed and coordinated by a centralized cloud server. Even though it is convenient and efficient, the centralized management mode has its inherent defects, such as single point of failure, network congestion and so on. Through blockchain technology \cite{nakamoto2008bitcoin}, its decentralized nature helps avoid single points of failure and improve security. 
Edge computing \cite{shi2016edge} \cite{luo2020edge} replaces the centralized cloud platform at the edge of the network, supports resource/delay sensitive devices, and effectively solves the problem of network congestion.
Thus, integrating blockchain and edge computing has become the development trend of the next generation IoT.

The emergence of blockchain technology brought new opportunities to the development of IoT. Blockchain \cite{nakamoto2008bitcoin} is a public and decentralized database that is used to store real-time data generated by all valid participants in the system without a third-party platform. Because they do not trust each other, each new generated data should be verified in a distributed manner before being added into a block, and then each new generated block should be validated as well by the consensus process before being added into the blockchain in a permanent and tamper-resistant manner. At the same time, the integrity of blocks in the blockchain can be guaranteed by some crytographic methods, such as asymmetric encryption algorithms and digital signatures \cite{johnson2001elliptic}. Moreover, each piece of data in the blockchain is traceable to all valid participants because of its chain-based structure. The security and reliability of blockchain-based systems are from their consensus mechanisms. Taking Proof-of-Works (PoW) \cite{nakamoto2008bitcoin} \cite{wood2014ethereum} as an example, each participant is required to solve a hash puzzle when competing with others to generate a new block, which is very computing-intensive and time-consuming.

For a blockchain-based IoT system consisting of many lightweight IoT devices, it is difficult for the devices to participate in the consensus process due to lack of sufficient computing power. However, by integrating edge computing, edge servers at the network edge can provide computing power to their neighboring IoT devices, which is helpful for building a fully functional IoT system. \cite{pan2018edgechain} \cite{yao2019resource} \cite{chang2020incentive} \cite{ding2020incentive} \cite{ding2020pricing}. Due to the reward from a blockchain-based system for participating in its consensus process, the IoT devices are willing to consume computing power for competing with others. In addition, these IoT devices may have their own computing tasks. For example, they need to train their deep learning model according to real-time monitoring data. Whether participating in consensus or completing computing tasks, the lightweight IoT device needs to purchase computing power from the edge servers nearby. Therefore, the fundamental problem in this paper, how to allocate limited edge servers to IoT devices in such a highly untrustworthy environment, is formulated. 

However, there are several challenges that are not concerned in existing mechanism designs \cite{pan2018edgechain} \cite{yao2019resource} \cite{chang2020incentive} \cite{ding2020incentive} \cite{ding2020pricing}. In order to ensure the fairness and security of trading between IoT devices and edge servers, the allocation mechanism should be online, truthful, and privacy safe. In this paper, we not only put forward a new allocation model, but also our designs are based on addressing these three issues.

Because the consensus process in a blockchain-based system is executed round by round and edge servers runs in a dynamic environment, the allocation mechanism must be online, which is executed round by round as well. In each round, the state of each IoT devices or edge server is different. Besides, truthfulness should be guaranteed so as to prevent malicious users from manipulating the trading market. This not only ensures the fairness of trading, but also ensures the security of the blockchain-based system. Based on these two aspects, we propose an online seal-bid multi-item double auction (MIDA) mechanism to achieve the resource allocation between IoT devices and edge servers. In the system, the IoT devices are buyers and edge servers are sellers. The MIDA mechanism can give us a one-to-one mapping between IoT devices and edge servers, which is used as a special case to conduct theoretical analysis. We show that it is individually rational, budget balanced, computationally efficient, and truthful. Then, we upgrade the MIDA to MIDA-G mechanism to establish a many-to-one mapping between IoT devices and edge servers, which is more realistic.

In such an auction scenario, it exists possible risk of privacy leakage in the multiple round of truthful bids given by buyers (truthful asks given by sellers), where the bids (asks) are private information of buyers (sellers). As we know, the auction results could be altered by the change in a single bid (ask) \cite{jin2016enabling} \cite{chen2019differentially}. Thus, our MIDA mechanism is vulnerable to inference attack \cite{zhu2014differentially} \cite{zhu2015differentially}, where adversaries could infer bids (asks) of other buyers (sellers) through comparing the auction results of multiple rounds. Thus, the adversary can make the auction result beneficial to itself by manipulating its own bid or ask. The differential privacy \cite{dwork2008differential} is a promising technology to prevent the adversary from inferring other truthful bids or asks through the public auction results. Commonly used schemes of achieving differential privacy with theoretical guarantee include exponetial mechanism and Laplace mechanism. Thus, another important part of this paper is to design a scheme with differential private protection for our MIDA mechanism so as to achieve privacy protection. Our main contributions in this paper can be summarized as follows.

\begin{itemize}
	\item We propose an online MIDA and MIDA-G mechanism to model the allocation of computing power between IoT devices and edge servers.
	\item By theoretical analysis, we show that our auction mechanism is individually rational, budget balanced, computationally efficient, and truthful.
	\item We give an example to demonstrate the risk of potential privacy leakage from inference attack, and propose a differential privacy-based MIDA-AP and MIDA-G-AP mechanism based on Laplace mechanism. This can effectively ensure privacy protection without affecting the truthfulness of our auction mechanisms.
\end{itemize}

\textbf{Orgnizations: }In Sec. \uppercase\expandafter{\romannumeral2}, we discuss the-state-of-art work. In Sec. \uppercase\expandafter{\romannumeral3}, we introduce the system model and define our problem formally. In Sec. \uppercase\expandafter{\romannumeral4}, we present our MIDA mechanism and theoretical analysis elaborately. In Sec. \uppercase\expandafter{\romannumeral5}, we achieve the differetial private strategy for our MIDA mechanism. Then, the more general MIDA-G mechanism is shown in Sec.\uppercase\expandafter{\romannumeral6}. Finally, we evaluate our algorithms by numerical simulations in Sec.\uppercase\expandafter{\romannumeral7} and show the conclusions in Sec. \uppercase\expandafter{\romannumeral8}.

\section{Related Work}
With the increasing development of blockchains, research on the blockchain-based IoT has attracted more and more attention. They exploited the decentralization of blockchain to achieve security, interoperability, privacy, and traceability \cite{dai2019blockchain} \cite{wu2019comprehensive}. For the resource allocation between IoT devices and edge servers, we summarize several classic articles here. Yao \textit{et al.} \cite{yao2019resource} studied the resource management and pricing problem between miners and cloud servers by Stackelberg game and reinforcement learning algorithm. Chang \textit{et al.} \cite{chang2020incentive} investigated how to encourage miners to purchase the computing resources from edge service provider and find the optimal solution by a two-stage Stackelberg game. Ding \textit{et al.} \cite{ding2020incentive} \cite{ding2020pricing} attempted to build a secure blockchain-based IoT system by attracting more IoT devices to purchase computing power from edge servers and participate in the consensus process, where they adopted a multi-leader multi-follower Stackelberg game. However, the truthfulness cannot be guaranteed by the above methods, which is hard to ensure the fairness.

Auction theory has been considered as a feasible solution in many different systems, such as mobile crowdsensing \cite{yang2015incentive} \cite{guo2020reliable}, mobile cloud/edge computing \cite{jin2015auction} \cite{jin2015auction1}, and energy trading \cite{wang2012designing} \cite{yassine2019double}. Here, we only focus on the multi-item double auction. Yang \textit{et al.} \cite{yang2011truthful} studied the cooperative communications by proposing a double auction mechanism, where they first got a mapping from buyer to seller through assignment algorithms and then used McAfree auction \cite{mcafee1992dominant} to determine winners and clearing prices. Jin \textit{et al.} \cite{jin2015auction} \cite{jin2015auction1} considered a resource allocation problem by designing a truthful double auction mechanism for the resource trading between users and cloudlet, but it is only one-to-one mapping. Guo \textit{et al.} \cite{guo2020double} proposed a secure and efficient charging scheduling system based on DAG-blockchain and double auction mechanism. Due to the online execution, double sides, multi-item, and resource-constrained edge servers, these existing methods are not suitable to our model.

In a fair auction platform, the auction results have to be public, which leads to the sensitive information of participants is at risk of being exposed. The theories and applications of privacy protection were investigated in \cite{yang2017survey} \cite{ferrag2018systematic}. Dwork \textit{et al.} \cite{dwork2008differential} first put forward the concept of differential privacy, and then it was applied to double auction mechanism for protecting players' privacy. Chen \textit{et al.} \cite{chen2019differentially} combined the differential privacy with double spectrum auction design in order to maximize social welfare approximately. Li \textit{et al.} \cite{li2019towards} proposed an online double auction scheme and combined with differential privacy to build a secure market among electric vehicles. Besides, a variety of differentially privacy-based schemes have been used to design double auction systems in \cite{hassan2019deal} \cite{li2020online} \cite{ni2020differentially}. Due to the difference of auction mechanisms, the above differential privacy-based schemes can not be directly applied to our model. We need to re-design according to the structure of our own algorithm. Actually, our differential privacy-based scheme is more simple and effective than the above works.

\section{Models and Preliminaries}
In this section, we introduce the model of integrating blockchain with edge computing, adversary attack, and definitions of double auction and differential privacy.

\subsection{System model}
Fig. \ref{fig1} illustrates the basic usage scenario that we consider in this paper. A typical instance of the blockchain-based IoT system is a network composed of many lightweight IoT devices that are used to perform some tasks such as environmental monitoring, where each IoT device can be considered as a node in the blockchain system. We assume that this blockchain system adopts a proof-of-work (PoW) consensus mechanism. Stimulated by the reward from participating in the consensus process of blockchain system, part of IoT devices would like to be miners. Then, they would compete with other miners for the right to generate the next block by solving a hash puzzle, which has been adopted widely in the Bitcoin system \cite{nakamoto2008bitcoin}. The block in the blockchain system consists of transactions stored in a merkle tree structure and block header that contains the hash value of its previous block. In the PoW consensus mechanism, the mining process is executed by miners to find a nonce such that
\begin{equation}
	Hash(Tranactions, Header, Nonce)\leq D
\end{equation}
where $D$ is a 256-bits binary number assigned by the platform to control the rate of block generation.

However, such a blockchain is limited by its high requirement for computing power because its consensus mechanism is based on solving a hash puzzle, thereby it cannot be applied to lightweight devices with limited power directly. At this time, these devices will attempt to purchase computing power from one or more edge servers around them and offload their mining tasks to their assigned edge servers. The number of edge servers is limited and each device may only offload its tasks to a few edge servers nearby. Thus, a natural question is how to allocate limited edge severs to the devices in the blockchain system. 

\begin{figure}[!t]
	\centering
	\includegraphics[width=\linewidth]{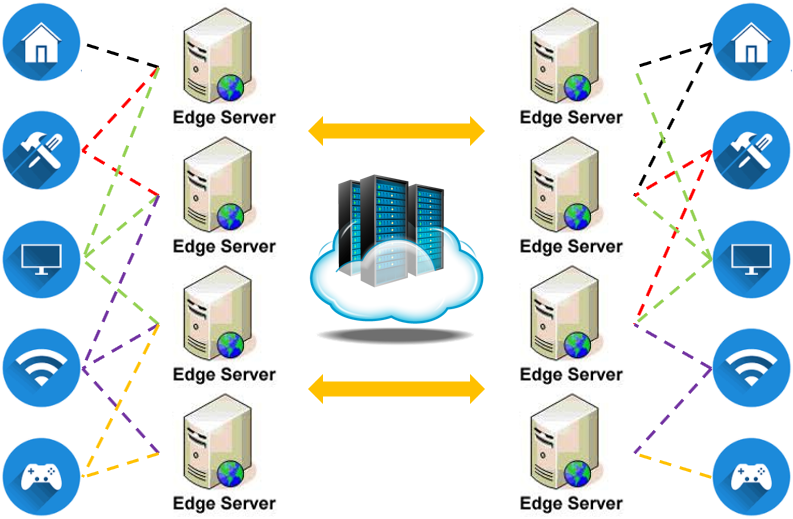}
	\centering
	\caption{The basic usage scenario of integrating blockchain and edge computing (Blockchain-based IoT system).}
	\label{fig1}
\end{figure}

\subsection{Problem Formulation}
Shown as Fig. \ref{fig1}, the dash lines that connect a device to several edge servers indicate the mining task of this device can only be offloaded to one of the edge servers that it connects to because of space location and bandwidth limitations. Therefore, we propose an online sealed-bid multi-item double auction mechanism to model the competition among devices in the blockchain system and edge servers. There are three kinds of players involved in such an auction: auctioneer, buyers, and sellers. In the system, the centralized cloud server acts as the auctioneer, devices that plan to purchase computing power act as buyers, and edge servers that can provide computing power act as sellers. The transactions between IoT devices and edge servers are established on the wireless infrastructure, which is the reason why the tasks of a device may only be assigned to a finite number of edge servers near it. Otherwise, the communication overhead will increase dramatically. Even though that, for each device, it has different preferences for different edge servers in line with the quality of service they provide, such as geographical locations, response speeds, credit scores, and other factors. This causes a device to have different valuations for different edge servers.

In the system, we denote by the set of devices (buyers) $D=\{d_1,d_2,\cdots,d_n\}$ who pay for purchase computing power and the set of edge servers (sellers) $S=\{s_1,s_2,\cdots,s_m\}$ who get reward for providing computing power. It formulates a bipartite graph $G=(D,S,E)$ like Fig. \ref{fig1} where $\{d_i,s_j\}\in E$ if device $d_i$ can be assigned to edge server $s_i$. As we know, the PoW consensus mechanism is executed round by round, where each round happens in a time slot. We consider a time interval $I$, which can be discretized into time slots as $I=\{1,2,\cdots,T\}$. In each slot $t\in I$, a round of consensus will be carried out, in other words, a round of auction will be carried out. Let us consider a special case first in which each edge server can serve at most one device in a slot. Therefore, for each slot $t\in I$, it is similar to finding a maximum weight matching in bipartite graph $G$ according to the bids of buyers and asks of sellers.

In a time slot $t\in I$, we define the notations in our auction as follows. For each buyer $d_i\in D$, its bid information can be given by $\mathcal{B}^t_i=(\vec{b}_i^t,r_i^t)$. Here, $\vec{b}_i^t=(b_{i,1}^t,b_{i,2}^t,\cdots,b_{i,m}^t)$ is its bid vector where $b_{i,j}^t\in\vec{b}_i^t$ ($b_{i,j}^t\in[b_{min},b_{max}]$) is the unit bid (maximum buying price) per unit computing power of purchasing from the seller $s_j\in S$ and $r_i^t$ is the amount of computing power it needs to buy. According to our scenario, we have $b_{i,j}^t=0$ if $\{d_i,s_j\}\notin E$. For each seller $s_j\in S$, its ask information can be given by $\mathcal{A}^t_j=(a_j^t,q_j^t)$. Here, $a_j^t\in[a_{min},a_{max}]$ is the unit ask (minimum selling price) per unit computing power and $q_j^t$ is the maximum amount of computing power it can provide. Based on the above definitions, a buyer gives different unit bids to different sellers, but a seller gives the same ask to different buyers since it only cares about the payment charged from buyers. The bid information of buyers and ask information of sellers are submitted to the auctioneer, thereby our auction can be defined on $\Omega^t=(\{\mathcal{B}_i^t\}_{d_i\in D},\{\mathcal{A}_j^t\}_{s_j\in S})$.

Given the $\Omega^t$ in the slot $t$, the auctioneer not only determines a winning buyer set $D_w^t\subseteq D$ and a winning seller set $S_w^t\subseteq S$, but also determines a bijective function $\sigma^t$ mapping from $D_w^t$ to $S_w^t$, where $\sigma^t(d_i)=s_j$ implies the computing task of device $d_i\in D_w^t$ is offloaded to edge server $s_j\in S_w^t$ in the slot $t$. Here, we can say ``device $d_i$ is assigned to edge server $s_j$''. We define the unit price $\hat{p}^t_i$ charged to buyer $d_i\in D_w^t$ and the unit payment $\bar{p}^t_j$ rewarded to seller $s_j\in S_w^t$. Besides, the valuation vector of each buyer $d_i\in D$ is $\vec{v}_i^t=(v_{i,1}^t,v_{i,2}^t,\cdots,v_{i,m}^t)$ where $v_{i,j}^t\in\vec{v}_i^t$ is the unit valuation per unit computing power of purchasing from the seller $s_j\in S$ and the unit cost per computing power of each seller $s_j\in S$ is $c_j^t$. According to the valuation vectors of buyers and costs of sellers, the utility $\hat{u}^t_i$ of buyer $d_i\in D$ and utility $\bar{u}^t_j$ of seller $s_j\in S$ can be defined. For each winning buyer $d_i\in D_w^t$ and winning seller $s_j\in S_w^t$, we have
\begin{flalign}
	&\hat{u}^t_i(\Omega^t)=\left(v^t_{i,\sigma^t(d_i)}-\hat{p}^t_i\right)\cdot r_i^t\\
	&\bar{u}^t_j(\Omega^t)=\left(\bar{p}_j^t-c_j^t\right)\cdot r_{\sigma^t_{-1}(s_j)}^t
\end{flalign}
where $\sigma^t_{-1}(s_j)=d_i$ means that $\sigma^t(d_i)=s_j$. Otherwise, we have $\hat{u}^t_i(\Omega^t)=0$ for each losing buyer $d_i\in D\backslash D_w^t$ and $\bar{u}^t_j(\Omega^t)=0$ for each losing seller $s_j\in S\backslash S_w^t$. The utilitis are characterized by the difference between charge (reward) and valuation (cost), which reflects their satisfaction of the current auction result.

For convenience, the allocation result determined by the auctioneer in the slot $t$ can be denoted by a matrix $\vec{X}^t=[x^t_{i,j}]_{d_i\in D,s_j\in S}$, where we have
\begin{equation}
	x^t_{i,j}=\left
	\{\begin{IEEEeqnarraybox}[\relax][c]{l's}
		1,&if buyer $d_i$ is assigned to seller $s_i$\\
		0,&otherwise
	\end{IEEEeqnarraybox}
	\right.
\end{equation}
Here, we consider the auctioneer as a non-profit platform whose objective is to maximum the socal welfare in current time slot. In the time slot $t\in I$, the social welfare maximization (SWM) problem is formulated:
\begin{align}
	\max\quad & \sum_{d_i\in D}\sum_{s_j\in S} x^t_{i,j}\cdot r_i^t\cdot\left(b_{i,j}^t-a_j^t\right)\label{eq5}\\
		s.\;t.\quad & \sum_{d_i\in D}x^t_{i,j}\leq 1, \;\forall s_j\in S \tag{\ref{eq5}{a}}\\
		& \sum_{s_j\in S}x^t_{i,j}\leq 1, \;\forall d_i\in D \tag{\ref{eq5}{b}}\\
		& \sum_{d_i\in D}r_i^t\cdot x^t_{i,j}\leq q_j^t, \;\forall s_j\in S \tag{\ref{eq5}{c}}\\
		& \sum_{t'\leq t}\sum_{s_j\in S}r_i^{t'}\cdot x^{t'}_{i,j}\leq\theta(D,S), \;\forall d_i\in D \tag{\ref{eq5}{d}}\\
		& x^t_{i,j}=0, \;\forall \{d_i,s_j\}\notin E \tag{\ref{eq5}{e}}\\
		& x^t_{i,j}\in\{0,1\}, \;\forall d_i\in D,\forall s_j\in S \tag{\ref{eq5}{f}}
\end{align}
Shown as (\ref{eq5}), the SWM problem illustrates all the constraints we give in our online double auction mechanism. (5a) and (5b) express the bijection relationship between winning buyer set $D_w^t$ and winning seller set $S_w^t$; (5c) implies the amount of computing power that $d_i$ buy from $s_j$ must be less than the amount of computing power that $s_j$ can provide if buyer $d_i$ is assigned to seller $s_j$; (5d) shows that the total computing resourse that buyer $d_i$ purchases in the time interval $I$ cannot exceed a threshold $\theta(D,S)$, which depends on the number of devices and edge servers in the system. It aims to prevent a device from owning too much computing power to undermine the security of blockchain system; and (5e) means that a device can only be assigned to its permitted edge servers. Actually, maximizing social welfare is just an idealized situation, and we usually need to sacrifice part of social welfare to ensure the truthfulness of auction mechanism.

\subsection{Potential Information Leakage}
For a justifiable auction platform, the auctioneer should announce the auction result in each slot $t$. The auction result refers to $\vec{X}^t$ (including $D_w^t$, $S_w^t$, and $\sigma^t$), the amount of computing power $\{r_i^t\}_{d_i\in D_w^t}$ and $\{q_j^t\}_{s_j\in S_w^t}$, and clearing unit price $\{\hat{p}^t_i\}_{d_i\in D_w^t}$ and $\{\bar{p}^t_j\}_{s_j\in S_w^t}$ to all players so as to make sure the fairness and verifiability of this auction. The reason to publish the amount of computing power is to let other players check the Constraint (5c) and (5d). Like this, the players avoid being cheated by the auctioneer because it cannot fabricate auction results for benefits. However, adversaries can use these public auction results to infer other players' private information and lead to privacy leakage. 

In our proposed auction mechanism, private information contains the unit bids of buyers $\{\vec{b}_i^t\}_{d_i\in D}$ and unit asks of sellers $\{a_j^t\}_{s_j\in S}$. Even though adversaries cannot get in touch with other players' private information, they can make inferences from these known auction results, which is called inference attack \cite{zhu2014differentially} \cite{zhu2015differentially}. Thus, we consider two kinds of privacy preservation (inference attack) in our mechanism design: (1) The adversary (some seller) infers the unit asks of other sellers; and (2) The adversary (some buyer) infers the unit bids of other buyers. This is because of the competitive relationship among buyers (or sellers). The inference attack will not only lead to the privacy leakage of players, but also make the auction unfair. We assume that all players have known how the auction operates in the beginning. By inferring other players' bids or asks, the adversary can change its strategy to increase its benefit strategically. Therefore, protecting privacy is a challenge that we must face in mechanism design.

\section{Online Double Auction Design}
In this section, we introduce several design rationales, a truthful double auction mechanism, and an example to explain the reason why private information is leaked.

\subsection{Design Rationales}
The online double auction in the time slot $t\in I$ has been defined as $\Omega^t$. A desired double auction mechanism should satisfy individual rationality, budget balance, computational efficiency, and truthfulness.
\begin{defn}[Individual Rationality]
	The utility for each player should be larger than or equal to zero. In our auction $\Omega^t$, we have $\hat{u}_i^t(\Omega^t)\geq 0$ for each buyer $d_i\in D$ and $\bar{u}^t_j(\Omega^t)\geq 0$ for each seller $s_j\in S$.
\end{defn}
\begin{defn}[Budget Balance]
	The auctioneer should be profitable to operate this auction. Thus, we have
	\begin{equation}\label{eq6}
		\sum_{d_i\in D}\hat{p}_i^t\cdot r_i^t-\sum_{s_j\in S}\bar{p}_j^t\sum_{d_i\in D} r_i^t\cdot x_{i,j}^t\geq 0
	\end{equation}
\end{defn}
\begin{defn}[Computational Efficiency]
	The auction results defined on the Sec. \uppercase\expandafter{\romannumeral3}.C can be obtained in polynomial time by the auction mechanism.
\end{defn}
\begin{defn}[Truthfulness]
	Every buyer (seller) bids (asks) truthfully is one of its dominant strategies, which maximizes its utility definitely. Thus, we have $\hat{u}^t_i((\vec{v}_i^t,r_i^t),\Omega^t_{-i})\geq\hat{u}^t_i((\vec{b}_i^t,r_i^t),\Omega^t_{-i})$ for each buyer $d_i\in D$ and $\bar{u}^t_j((c_j^t,q_j^t),\Omega^t_{-j})\geq\bar{u}^t_j((a_j^t,q_j^t),\Omega^t_{-j})$ for each seller $s_j\in S$, where $\Omega^t_{-i}$ ($\Omega^t_{-j}$) is the strategy collection of players execpt buyer $d_i$ (seller $s_j$). If an auction is truthful, there is no buyer improving its utility by giving a bid vector different from its valuation vector and no seller improving its utility by giving a ask different from its cost.
\end{defn}

When we consider the truthfulness, we suppose that the amount of computing power $r_i^t$ ($g_j^t$) submitted by the buyer (seller) is authentic and public since it can be monitored and must be executed once assigned. Becasue of the truthfulness, no player has the motivation to change its strategy for obtaining more benefit, which makes the strategic decision of players easier and guarantees a fair competitive environment.

\subsection{Algorithm Design}
Here, we propose a multi-item double auction (MIDA) mechanism that attempts to maximize the social welfare but ensure the truthfulness. It is shown in Algorithm \ref{a1}, which consists of two parts, winning candidate determination (MIDA-WCD) shown in Algorithm \ref{a2} and assignment \& pricing (MIDA-AP) shown in Algorithm \ref{a3}.

\begin{algorithm}[!t]
	\caption{\text{MIDA $(\Omega^t)$}}\label{a1}
	\begin{algorithmic}[1]
		\renewcommand{\algorithmicrequire}{\textbf{Input:}}
		\renewcommand{\algorithmicensure}{\textbf{Output:}}
		\REQUIRE $\Omega^t=(\{\mathcal{B}_i^t\}_{d_i\in D},\{\mathcal{A}_j^t\}_{s_j\in S})$
		\ENSURE $D_w^t,S_w^t,\sigma^t,\hat{P}_w^t,\bar{P}_w^t$
		\STATE $(D_c^t,S_c^t,a_{j_\phi}^t)\leftarrow$ MIDA-WCD $(\Omega^t)$
		\STATE $(D_w^t,S_w^t,\sigma^t,\hat{P}_w^t,\bar{P}_w^t)\leftarrow$ MIDA-AP $(\Omega^t,D_c^t,S_c^t,a_{j_\phi}^t)$
		\RETURN $(D_w^t,S_w^t,\sigma^t,\hat{P}_w^t,\bar{P}_w^t)$ 
	\end{algorithmic}
\end{algorithm}

\begin{algorithm}[!t]
	\caption{\text{MIDA-WCD $(\Omega^t)$}}\label{a2}
	\begin{algorithmic}[1]
		\renewcommand{\algorithmicrequire}{\textbf{Input:}}
		\renewcommand{\algorithmicensure}{\textbf{Output:}}
		\REQUIRE $\Omega^t=(\{\mathcal{B}_i^t\}_{d_i\in D},\{\mathcal{A}_j^t\}_{s_j\in S})$
		\ENSURE $D_c^t,S_c^t,a_{j_\phi}^t$
		\STATE $D_c^t\leftarrow\emptyset,S_c^t\leftarrow\emptyset$
		\STATE Construct a set $D^t_*=\{d^t_{k,l}:d^t_{k,l}\text{ satisfies (7) (8)}\}$
		\STATE Sort the sellers such that $S^t_*=\langle s_{j_1}^t,s_{j_2}^t,\cdots,s_{j_m}^t\rangle$ where $a_{j_1}^t\leq a_{j_2}^t\leq\cdots\leq a_{j_m}^t$	
		\STATE Find the median ask $a_{j_\phi}^t$ of $S^t_*$, $\phi=\left\lceil\frac{m+1}{2}\right\rceil$
		\FOR {each $d^t_{k,l}\in D^t_*$}
		\IF {$b^t_{k,l}\geq a_{j_\phi}^t$ and $a_l^t<a_{j_\phi}^t$}
		\STATE $D_c^t\leftarrow D_c^t\cup\{d^t_{k,l}\}$
		\IF {$s_l\notin S_c^t$}
		\STATE $S_c^t\leftarrow S_c^t\cup\{s_l\}$
		\ENDIF
		\ENDIF
		\ENDFOR
		\RETURN $(D_c^t,S_c^t,a_{j_\phi}^t)$
	\end{algorithmic}
\end{algorithm}

Shown as Algorithm \ref{a2}, we construct a set of buyer-seller pairs $D^t_*$ first where each pair $d_{k,l}^t\in D^t_*$ if
\begin{flalign}
	& d_k\in D,s_l\in S,\{d_k,s_l\}\in E,b^t_{k,l}>0,r_k^t\leq q_l^t\\
	& \sum_{t'<t}\sum_{s_j\in S}r_k^{t'}\cdot x^{t'}_{k,j}+r_k^t\leq\theta(D,S)
\end{flalign}
It can be denoted by $D^t_*=\{d^t_{k,l}:d^t_{k,l}\text{ satisfies (7) (8)}\}$, which means that the buyer $d_k$ is feasible to be assigned to the seller $s_l$ in the slot $t$. Then, we sort the sellers based on their asks in an ascending order and select the median $a_{j_\phi}^t$ as a threshold to balance the number of winning buyer candidates and winning seller candidates. For each pair $d^t_{k,l}\in D^t_*$, it will be a winning buyer candidate if its bid $b_{k,l}^t$ is not less than $a_{j_\phi}^t$ and the ask of its corresponding seller $a_l^t$ is less than $a_{j_\phi}^t$. At the same time, seller $s_l$ will be a winning seller candidate if there is at least one winning buyer candidate $d^t_{k,l}\in D_c^t$ existing that bids for it.  

\begin{algorithm}[!t]
	\caption{\text{MIDA-AP $(\Omega^t,D_c^t,S_c^t,a_{j_\phi}^t)$}}\label{a3}
	\begin{algorithmic}[1]
		\renewcommand{\algorithmicrequire}{\textbf{Input:}}
		\renewcommand{\algorithmicensure}{\textbf{Output:}}
		\REQUIRE $\Omega^t,D_c^t,S_c^t,a_{j_\phi}^t$
		\ENSURE $D_w^t,S_w^t,\sigma^t,\hat{P}_w^t,\bar{P}_w^t$
		\STATE $D_w^t\leftarrow\emptyset,S_w^t\leftarrow\emptyset,\hat{P}_w^t\leftarrow\emptyset,\bar{P}_w^t\leftarrow\emptyset$
		\STATE Create a sorted list $Q_j^t=\langle d^t_{k,j}:d^t_{k,j}\in D_c^t\rangle$ for each $s_j\in S_c^t$ such that $b_{k_1,j}^t\cdot r_{k_1}\geq b_{k_2,j}^t\cdot r_{k_2}\geq\cdots$
		\FOR {each $s_l\in S_c^t$}
		\STATE $d_{k_1,l}^t\leftarrow Q_l^t[1]$ // \textit{The fisrt pair in $Q_l^t$}
		\IF {$d_{k_1}\notin D_w^t$}
		\STATE $D_w^t\leftarrow D_w^t\cup\{d_{k_1}\}$
		\ENDIF
		\IF {$|Q_l^t|==1$}
		\STATE $\hat{p}^t_{k_1,l}\leftarrow a_{j_\phi}^t$
		\ELSE
		\STATE $d_{k_2,l}^t\leftarrow Q_l^t[2]$
		\STATE $\hat{p}^t_{k_1,l}\leftarrow\max\{a_{j_\phi}^t, b_{k_2,l}^t\cdot(r_{k_2}^t/r_{k_1}^t)\}$
		\ENDIF
		\ENDFOR
		\FOR {each $d_k\in D_w^t$}
		\STATE $H^t_k=\{s_l:s_l\in S_c^t,Q_l^t[1]=d^t_{k,l}\}$
		\STATE Find $s_{l'}\leftarrow\arg\max_{s_l\in H_k^t}\{(b_{k,l}^t-\hat{p}_{k,l}^t)\cdot r_k^t\}$
		\STATE $\sigma^t(d_k)\leftarrow s_{l'}$
		\STATE $S_w^t\leftarrow S_w^t\cup\{s_{l'}\}$
		\STATE $\hat{p}^t_k\leftarrow\hat{p}^t_{k,l'}$, $\bar{p}^t_{l'}\leftarrow a_{j_\phi}^t$
		\STATE $\hat{P}_w^t\leftarrow\hat{P}_w^t\cup\{\hat{p}^t_k\}$, $\bar{P}_w^t\leftarrow\bar{P}_w^t\cup\{\bar{p}^t_{l'}\}$
		\ENDFOR
		\RETURN $(D_w^t,S_w^t,\sigma^t,\hat{P}_w^t,\bar{P}_w^t)$
	\end{algorithmic}
\end{algorithm}

Shown as Algorithm \ref{a3}, we create a sorted list $Q_j^t$ for each winning seller candidate $s_j\in S_c^t$ that contains all winning buyer candidates $\langle d^t_{k,j}:d^t_{k,j}\in D_c^t\rangle$ bidding for it and is sorted according to the total bid $d^t_{k,j}\cdot r_k^t$. The total bid is equal to the unit bid multiply by the amount of computing power. From line 3 to 14 in Algorithm \ref{a3}, it determines the target (buyer) of providing service for each winning seller candidate and the corresponding unit price charged to the target. For each $s_l\in S_c^t$, its target is buyer $d_{k_1}$ where the $d_{k_1,l}^t$ is the first part in $Q_l^t$ and $\hat{p}_{k_1,l}^t$ is the unit price charged to buyer $d_{k_1}$ if the $d_{k_1}$ will be assigned to $s_l$ next. From line 15 to 22 in Algorithm \ref{a3}, for each winnig buyer $d_k\in D_w^t$, it is assigned to the seller $s_{l'}$ that can obtain its maximum utility, thus we have $b_{k,l}^t-\hat{p}_{k,l}^t\geq b_{k,l'}^t-\hat{p}_{k,l'}^t$ for each $s_l\in H_k^t$. Then, the $s_{l'}$ is selected as a winning buyer, and we have $\hat{p}^t_k=\hat{p}^t_{k,l'}$ as well as $\bar{p}^t_{l'}=a_{j_\phi}^t$.

\subsection{Theoretical Analysis of MIDA}
Next, we show that our MIDA shown as Algorithm \ref{a1} satisfies above design rationales.
\begin{lem}\label{lem1}
	The MIDA is individually rational.
\end{lem}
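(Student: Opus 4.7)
The plan is to establish individual rationality by case analysis on whether a player wins or loses. Losing players contribute zero utility by the definition given in Section III.B, so those cases are immediate and I would dispatch them in one sentence. The substance lies in showing that every winning buyer is charged at most its bid and every winning seller is paid at least its ask; under truthful reporting (which is a separate property still to be proved but assumed in the definition), these translate directly into nonnegative utility.

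For a winning buyer $d_k \in D_w^t$ assigned to $s_{l'} = \sigma^t(d_k)$, the relevant inequality is $b_{k,l'}^t \geq \hat{p}_k^t$. The argument I would give traces the price back through Algorithm \ref{a3}: since $d_k$ is assigned to $s_{l'}$, the pair $d_{k,l'}^t$ sits at the head of $Q_{l'}^t$, and the charge $\hat{p}_k^t = \hat{p}_{k,l'}^t$ equals $\max\{a_{j_\phi}^t,\ b_{k_2,l'}^t \cdot (r_{k_2}^t / r_k^t)\}$ (with the first branch used when $|Q_{l'}^t|=1$). I would then bound each term of the max separately: the inequality $b_{k,l'}^t \geq a_{j_\phi}^t$ is exactly the membership condition for $d_{k,l'}^t \in D_c^t$ in Algorithm \ref{a2} (line 6), and $b_{k,l'}^t \cdot r_k^t \geq b_{k_2,l'}^t \cdot r_{k_2}^t$ is the sorting rule defining $Q_{l'}^t$, which on dividing by $r_k^t$ yields $b_{k,l'}^t \geq b_{k_2,l'}^t \cdot (r_{k_2}^t/r_k^t)$. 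Taking the max gives $b_{k,l'}^t \geq \hat{p}_k^t$, hence $\hat{u}_k^t(\Omega^t) \geq 0$ under truthful bidding $v_{k,l'}^t = b_{k,l'}^t$.

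For a winning seller $s_{l'} \in S_w^t$, the goal is $\bar{p}_{l'}^t \geq a_{l'}^t$. Here the argument is even shorter: Algorithm \ref{a3} sets $\bar{p}_{l'}^t = a_{j_\phi}^t$, and $s_{l'}$ can only enter $S_c^t$ (and thus $S_w^t$) because some $d_{k,l'}^t$ was admitted to $D_c^t$, which by the test in line 6 of Algorithm \ref{a2} forces $a_{l'}^t < a_{j_\phi}^t$. Combining gives $\bar{p}_{l'}^t > a_{l'}^t$, so under truthful asking $c_{l'}^t = a_{l'}^t$ we obtain $\bar{u}_{l'}^t(\Omega^t) > 0$.

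I do not expect a real obstacle here; the lemma is essentially a bookkeeping check that the thresholding in MIDA-WCD together with the McAfee-style second-bid pricing in MIDA-AP never crosses the participants' reservation values. The one subtlety worth calling out explicitly is the implicit appeal to truthful reporting when identifying $b_{k,l'}^t$ with $v_{k,l'}^t$ and $a_{l'}^t$ with $c_{l'}^t$; since truthfulness is itself a later lemma, I would phrase the statement in the customary way (individual rationality is with respect to truthful reporting) so the argument is not circular.
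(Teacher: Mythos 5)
Your proposal is correct and follows essentially the same route as the paper's proof: bound the winning buyer's charge $\hat{p}_k^t=\max\{a_{j_\phi}^t,\,b_{k_2,l'}^t\cdot(r_{k_2}^t/r_k^t)\}$ by the candidate-admission condition $b_{k,l'}^t\geq a_{j_\phi}^t$ and the sorting rule of $Q_{l'}^t$, and observe that a winning seller receives $a_{j_\phi}^t>a_{l'}^t$ by the same admission test. If anything, you are slightly more careful than the paper (which phrases the two bounds with an ``or'' rather than noting that both are needed to dominate the max), and your explicit remark that individual rationality is stated with respect to truthful reporting is a worthwhile clarification the paper leaves implicit.
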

\begin{proof}
	For each winning buyer $d_k\in D_w^t$, the unit price $\hat{p}^t_{k}$ charged to it is either $a_{j_\phi}^t$ or $b_{k_2,\sigma^t(k)}^t\cdot(r_{k_2}^t/r_{k}^t)$ where we have $d_{k_2,\sigma^t(k)}^t=Q^t_{\sigma^t(k)}[2]$. We have known that $b^t_{k,\sigma^t(k)}\geq a_{j_\phi}^t$ or $b^t_{k,\sigma^t(k)}\cdot r_k^t\geq b^t_{k_2,\sigma^t(k)}\cdot r_{k_2}^t$ since $k=Q_{\sigma^t(k)}^t[1]$. Thus, we have $b^t_{k,\sigma^t(k)}\geq \hat{p}^t_{k}$ and $\hat{u}_k^t(\Omega^t)\geq 0$. For each winning seller $s_l\in S_w^t$, we have $\bar{p}^t_l=a_{j_\phi}^t> a^t_l$ and $\bar{u}^t_l(\Omega^t)\geq 0$. Thus, the MIDA is individually rational since both buyers and sellers are indifually rational.
\end{proof}

\begin{lem}
	The MIDA is budget balanced.
\end{lem}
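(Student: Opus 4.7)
The plan is to reduce the global inequality in (\ref{eq6}) to a sum of pointwise inequalities, one for each winning buyer, by exploiting the bijection $\sigma^t:D_w^t\to S_w^t$ produced by MIDA-AP. First I would throw away the trivial terms: a losing buyer $d_i\in D\setminus D_w^t$ is never charged, so it contributes $0$ to the left-hand sum, and for a losing seller $s_j\in S\setminus S_w^t$ the inner sum $\sum_{d_i\in D}r_i^t\cdot x_{i,j}^t$ is $0$ since nobody is assigned to it. This lets me rewrite (\ref{eq6}) as
\begin{equation*}
\sum_{d_k\in D_w^t}\hat{p}_k^t\cdot r_k^t-\sum_{s_l\in S_w^t}\bar{p}_l^t\sum_{d_i\in D} r_i^t\cdot x_{i,l}^t\geq 0.
\end{equation*}

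Next I would use the one-to-one matching. Because MIDA-AP assigns each winning seller $s_l\in S_w^t$ exactly one winning buyer $\sigma^t_{-1}(s_l)\in D_w^t$ and constraints (5a)--(5b) ensure no other buyer is paired with $s_l$, the inner sum collapses to $r_{\sigma^t_{-1}(s_l)}^t$. Re-indexing the seller sum along $\sigma^t$ and combining,
\begin{equation*}
\sum_{d_k\in D_w^t}\hat{p}_k^t\cdot r_k^t-\sum_{s_l\in S_w^t}\bar{p}_l^t\cdot r_{\sigma^t_{-1}(s_l)}^t=\sum_{d_k\in D_w^t}\bigl(\hat{p}_k^t-\bar{p}_{\sigma^t(d_k)}^t\bigr)\cdot r_k^t.
\end{equation*}

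It then suffices to argue each summand is non-negative. By line 20 of Algorithm \ref{a3}, $\bar{p}_{\sigma^t(d_k)}^t=a_{j_\phi}^t$ for every winning seller, and by lines 9 and 12 the price charged to $d_k$ is $\hat{p}_k^t=\max\{a_{j_\phi}^t,\,b_{k_2,\sigma^t(d_k)}^t\cdot(r_{k_2}^t/r_k^t)\}$ (or simply $a_{j_\phi}^t$ when the queue has length one), so in every case $\hat{p}_k^t\geq a_{j_\phi}^t=\bar{p}_{\sigma^t(d_k)}^t$. Since $r_k^t\geq 0$, each summand is non-negative and the claim follows.

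The only step that requires any care is the re-indexing from sellers to buyers via $\sigma^t$; the rest is immediate from the pricing rule and the definition of the median ask $a_{j_\phi}^t$. I do not expect a genuine obstacle here, since the auctioneer's surplus is by construction $(\hat{p}_k^t-a_{j_\phi}^t)\cdot r_k^t$ on each matched pair.
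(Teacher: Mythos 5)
Your proposal is correct and follows essentially the same route as the paper: both reduce Inequality (\ref{eq6}) via the bijection $\sigma^t$ to the per-pair sum $\sum_{d_k\in D_w^t}(\hat{p}^t_k-\bar{p}^t_{\sigma^t(d_k)})\cdot r_k^t$ and then observe that $\hat{p}^t_k\geq a_{j_\phi}^t$ while $\bar{p}^t_{\sigma^t(d_k)}=a_{j_\phi}^t$. Your write-up is in fact slightly more careful, since the paper's proof misstates the seller payment as strictly below $a_{j_\phi}^t$ (it equals $a_{j_\phi}^t$ by line 20 of Algorithm \ref{a3}, as you correctly note), though this does not affect the conclusion.
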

\begin{proof}
	These is a bijection between winning buyer set $D_w^t$ and winning seller set $S_w^t$. Thus, Equ. (\ref{eq6}) can be written as
	\begin{equation}\label{eq9}
		\sum_{s_k\in D_w^t}(\hat{p}^t_k-\bar{p}^t_{\sigma^t(k)})\cdot r_k^t\geq 0
	\end{equation}
	since we have $\hat{p}^t_k\geq a_{j_\phi}^t$ and $\bar{p}^t_{\sigma^t(k)}<a_{j_\phi}^t$ according to Lemma \ref{lem1}. Thus, the MIDA is budget balanced.
\end{proof}

\begin{lem}
	The MIDA is computationally efficient.
\end{lem}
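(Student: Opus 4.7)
The plan is to show polynomial-time complexity by walking through Algorithms \ref{a2} and \ref{a3} line by line and bounding the cost of each step as a polynomial in $n=|D|$ and $m=|S|$, then summing. Since Algorithm \ref{a1} is just a sequential composition of these two subroutines, once each subroutine is shown to run in polynomial time the conclusion follows immediately.

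For MIDA-WCD, I would first argue that the candidate pair set $D_*^t$ has size at most $nm$ and that checking conditions (7) and (8) for a fixed pair $d_{k,l}^t$ takes time $O(t)$ at worst (the cumulative resource test (8) requires scanning previous allocations), so the construction in Line 2 costs $O(nmt)$. Sorting the sellers in Line 3 costs $O(m\log m)$, and the median lookup in Line 4 is $O(1)$. The loop from Line 5 to Line 12 iterates at most $nm$ times with $O(1)$ work per iteration (including the set membership checks, assuming standard hash-based set structures). Thus MIDA-WCD runs in $O(nmt + m\log m)$ time.

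For MIDA-AP, building the sorted lists $\{Q_j^t\}_{s_j\in S_c^t}$ in Line 2 costs $O(mn\log n)$ in the worst case, since each of at most $m$ lists contains at most $n$ pairs and can be sorted in $O(n\log n)$. The first loop over $S_c^t$ (Lines 3--14) performs $O(1)$ work per winning seller candidate and thus costs $O(m)$. The second loop over $D_w^t$ (Lines 15--22) performs $O(m)$ work per winning buyer to compute $H_k^t$ and select the utility-maximizing seller, giving $O(nm)$ in total. Summing, MIDA-AP is $O(mn\log n)$.

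Combining, the total running time of MIDA is $O(nmt + mn\log n)$, which is polynomial in the input size. I do not foresee a genuine obstacle here; the only subtlety worth flagging is the factor of $t$ inherited from the history-dependent constraint (5d), but since $t\leq T=|I|$ is part of the input this still counts as polynomial-time under the standard convention for online mechanisms. The argument is otherwise a routine accounting over the pseudocode.
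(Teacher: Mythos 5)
Your proposal is correct and follows essentially the same route as the paper: a step-by-step accounting of MIDA-WCD and MIDA-AP, bounding the pair-set construction, the seller sort, and the per-seller list sorting, and concluding polynomial time. The only difference is cosmetic — you bound $|D_*^t|$ by $nm$ where the paper uses $|E|$, and you flag the $O(t)$ cost of checking the cumulative constraint (8), a detail the paper silently absorbs.
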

\begin{proof}
	Let us look at Algorithm \ref{a2} first. Sorting the sellers takes $O(m\log(m))$ time and there are at most $|E|$ pairs in $D^t_*$. From this, the running time of Algorithm \ref{a2} is bounded by $O(m\log(m)+|E|)$. Then, let us look at Algorithm \ref{a3}, it takes at most $O(\phi\cdot n\log(n))$ time to construct the $Q_j^t$ for each seller candidate $s_j\in S_c^t$. Other steps in Algorithm \ref{a3} can be complete by a constant time. The running time of Algorithm \ref{a3} is bounded by $O(\phi\cdot n\log(n))$. Therefore, it is obvious that the MIDA can be completed in polynomial time, which is computationally efficient.
\end{proof}

\begin{lem}
	The MIDA is truthful.
\end{lem}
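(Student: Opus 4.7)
The plan is to prove truthfulness for buyers and sellers separately; in each case I fix all other players' reports and show that the target player's utility is weakly maximized by reporting truthfully. The guiding principle is that the mechanism is designed so that each side behaves like a second-price auction in disguise: a buyer's unit price (resp.\ a seller's payment) is determined by other players' reports and by threshold conditions on the target's own report, but not by the exact magnitude of that report.

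For the buyer side I would fix all asks and all other bids, so the median threshold $a_{j_\phi}^t$ is fixed. For each seller $s_l$ individually, the quantity $b_{k,l}^{\mathrm{crit}} = \max\{a_{j_\phi}^t,\, b_{k_2,l}^t\cdot r_{k_2}^t / r_k^t\}$ is the smallest bid that puts $d_k$ at the head of $Q_l^t$, and it depends only on others' inputs. When $d_k$ reports truthfully, $s_l\in H_k^t$ iff $v_{k,l}^t\geq b_{k,l}^{\mathrm{crit}}$, and the price charged is exactly $b_{k,l}^{\mathrm{crit}}$, giving non-negative single-seller surplus. A misreport can only (i) add some $s_l$ with $v_{k,l}^t<b_{k,l}^{\mathrm{crit}}$ to $H_k^t$, which contributes strictly negative surplus, or (ii) drop a seller with $v_{k,l}^t\geq b_{k,l}^{\mathrm{crit}}$ from $H_k^t$; since the final assignment takes an argmax over $H_k^t$ using the true $v_{k,l}^t$, neither operation can improve the maximum surplus compared with the truthful $H_k^t$.

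For the seller side, fix the other $m-1$ asks and let $\bar{A}$ denote their $(\phi-1)$-st smallest value. A direct case analysis of where $a_l^t$ lands in the sorted list shows that the inclusion predicate $a_l^t<a_{j_\phi}^t$ holds iff $a_l^t<\bar{A}$, and on that half-line $a_{j_\phi}^t\equiv\bar{A}$ is constant. Hence throughout $a_l^t<\bar{A}$ the entire downstream construction---the sets $D_c^t,S_c^t$, the sorted queues $Q_j^t$, the choice sets $H_k^t$, and the final assignment $\sigma^t$---is invariant in $a_l^t$, so $s_l$'s utility is a step function of $a_l^t$: a constant value $(\bar{A}-c_l)\cdot r_{\sigma_{-1}^t(s_l)}^t$ (or $0$ if $s_l\notin S_w^t$) on $a_l^t<\bar{A}$, and $0$ on $a_l^t\geq\bar{A}$. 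If $c_l<\bar{A}$, the low-ask level is non-negative, so truthfulness weakly beats any upward deviation that leaves the winning region; if $c_l\geq\bar{A}$, the low-ask level is non-positive, so truthful reporting (which yields zero) weakly dominates.

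The main obstacle will be the seller-side invariance claim. Rigorously establishing it requires verifying that the predicate $a_{l'}^t<a_{j_\phi}^t$ evaluates consistently for every seller $s_{l'}$, including $s_l$ itself appearing inside its own pairs $d_{k,l}^t$, and handling the boundary and tie cases at the median position by fixing an arbitrary deterministic tie-breaking rule in the sort step so that the sorted order stays fixed as $a_l^t$ moves within $(-\infty,\bar{A})$. Once invariance is in hand, both sides reduce to standard single-parameter arguments and the conclusion follows.
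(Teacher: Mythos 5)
Your proposal is correct and follows essentially the same route as the paper's proof: fix the other players' reports, observe that a buyer's per-seller price $\max\{a_{j_\phi}^t,\,b_{k_2,l}^t r_{k_2}^t/r_k^t\}$ is a critical value independent of its own bid, and observe that a seller's winning predicate and payment are governed by the $(\phi-1)$-st order statistic of the other asks. The paper reaches the same conclusions by enumerating cases (1.a)--(1.d) and (3.a)--(4.d); your critical-value and $\bar{A}$-threshold packaging is just a cleaner, unified statement of those same facts.
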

\begin{proof}
	For each buyer $d_k\in D$, we need to judge whether  $\hat{u}^t_k((\vec{v}_k^t,r_k^t),\Omega^t_{-k})\geq\hat{u}^t_k((\vec{b}_k^t,r_k^t),\Omega^t_{-k})$.
	
	\noindent
	\textbf{(1)} The $d_k\in D_w^t$ if it bids truthfully: Based on Algorithm \ref{a3}, we have $\sigma^t(d_k)=\arg\max_{s_l\in H^t_k}\{(b_{k,l}^t-\hat{p}_{k,l}^t)\cdot r_k^t\}$. For each seller $s_l\in H^t_k$, we denoted by $\hat{u}^t_{k,l}=(v_{k,l}^t-\hat{p}_{k,l}^t)\cdot r_k^t$; Otherwise, $\hat{u}^t_{k,l}=0$. For convenience, any notation $z^t_k$ and $\underline{z}^t_k$ refer to the concepts given by truthful bid $\vec{v}^t_k$ and untruthful bid $\vec{b}^t_k$. For each seller $s_l\in H^t_k$, we consider the following two sub-cases:
	\begin{itemize}
		\item (1.a) $b^t_{k,l}>v^t_{k,l}$: The price $\underline{\hat{p}}^t_{k,l}$ charged to $d_k$ is equal to $\hat{p}^t_{k,l}$ because the $d_{k,l}^t$ has been the first pair in $Q_l^t$ when bidding truthfully. Thus, we have $\underline{\hat{u}}^t_{k,l}=\hat{u}^t_{k,l}$.
		\item (1.b) $b^t_{k,l}<v^t_{k,l}$: The price $\underline{\hat{p}}^t_{k,l}$ charged to $d_k$ is equal to $\hat{p}^t_{k,l}$ if the $d_{k,l}^t$ is still the first pair in $Q_l^t$ when bidding untruthfully ($b^t_{k,l}\geq\hat{p}^t_{k,l}$). Thus, we have $\underline{\hat{u}}^t_{k,l}=\hat{u}^t_{k,l}$. If $b^t_{k,l}<\hat{p}^t_{k,l}$, we have $s_l\notin\underline{H}^t_k$ and $\underline{\hat{u}}^t_{k,l}=0$. Thus, we have $\underline{\hat{u}}^t_{k,l}=0<\hat{u}^t_{k,l}$.
	\end{itemize}
	Therefore, the utility cannot be improved by bidding untruthfully to seller in $H^t_k$. For each seller $s_l\in S\backslash H^t_k$, we consider the following two sub-cases:
	\begin{itemize}
		\item (1.c) $d^t_{k,l}\notin D_c^t$ but $d^t_{k,l}\in D^t_*$: It has nothing to do with what the bid $b^t_{k,l}$ is if $a^t_l\geq a_{j_\phi}^t$. Thus, we have $\underline{\hat{u}}^t_{k,l}=\hat{u}^t_{k,l}=0$. Consider $a^t_l<a_{j_\phi}^t$ and $v^t_{k,l}<a^t_l$, the $d_k$ increases its bid $b^t_{k,l}$ such that $b^t_{k,l}\geq a_{j_\phi}^t$. Now, we get $\underline{\hat{u}}^t_{k,l}=(v^t_{k,l}-\underline{\hat{p}}^t_{k,l})\cdot r_k^t\leq(v^t_{k,l}-a_{j_\phi}^t)\cdot r_k^t<0$ if $s_l\in\underline{H}^t_k$, otherwise $\underline{\hat{u}}^t_{k,l}=0$. Thus, we have $\underline{\hat{u}}^t_{k,l}\leq\hat{u}^t_{k,l}=0$.
		\item (1.d) $d^t_{k,l}\in D_c^t$ but $s_l\notin H^t_k$: There is a pair $b^t_{k_1,l}=Q^t_l[1]$ such that $b^t_{k_1,l}\cdot r_{k_1}^t\geq v^t_{k,l}\cdot r_{k}^t$. To make $s_l$ be in $H^t_k$, the $d_k$ increases its bid such that $b^t_{k,l}\geq b^t_{k_1,l}$. Like this, the charged price will be $\underline{\hat{p}}^t_{k,l}=b^t_{k_1,l}\cdot (r_{k_1}^t/r_k^t)\geq v^t_{k,l}$. Now, we get $\underline{\hat{u}}^t_{k,l}\leq 0$ if $s_l\in\underline{H}^t_k$, otherwise $\underline{\hat{u}}^t_{k,l}=0$. Thus, we have $\underline{\hat{u}}^t_{k,l}\leq\hat{u}^t_{k,l}=0$.
	\end{itemize}

	\noindent
	\textbf{(2)} The $d_k\notin D_w^t$ if it bids truthfully: There is no such a seller $s_l$ that can satisfy $d^t_{k,l}=Q^t_l[1]$. For each seller $s_l\in S$, it can be analyzed like (1.c) and (1.d) before. Thus, we have $\underline{\hat{u}}^t_{k,l}\leq\hat{u}^t_{k,l}=0$.
	
	When $d_k\in D_w^t$ if it bids truthfully, its utility $\hat{u}^t_k=\max_{s_l\in H^t_k}\{\hat{u}^t_{k,l}\}\geq \max_{s_l\in \underline{H}^t_k}\{\underline{\hat{u}}^t_{k,l}\}=\underline{\hat{u}}^t_k$; When $d_k\notin D_w^t$ if it bids truthfully, $\hat{u}^t_k=0\geq\underline{\hat{u}}^t_k$. Therefore, the buyers are truthful. Next, For each buyer $s_l\in S$, we need to judge whether  $\bar{u}^t_l((c_l^t,q_l^t),\Omega^t_{-l})\geq\bar{u}^t_l((a_l^t,q_l^t),\Omega^t_{-l})$.
	
	\noindent
	\textbf{(3)} The $s_l\in S_w^t$ if it asks truthfully: Based on Algorithm \ref{a2}, we have $v^t_l<a_{j_\phi}^t$. For convenience, any notation $z^t_l$ and $\underline{z}^t_l$ refer to the concepts given by truthful ask $c^t_l$ and untruthful ask $a^t_l$. We consider the following two sub-cases:
	\begin{itemize}
		\item (3.a) $a^t_l\geq a^t_{j_\phi}$: It is easy to infer that $a^t_l\geq\underline{a}^t_{j_\phi}\geq a^t_{j_\phi}$. Thus, we have $\bar{u}^t_l>\underline{\bar{u}}^t_l=0$.
		\item (3.b) $a^t_l< a^t_{j_\phi}$: Now, we have $\underline{a}^t_{j_\phi}= a^t_{j_\phi}$ and $\sigma^t_{-1}(s_l)=\underline{\sigma}^t_{-1}(s_l)$ because all steps remain unchanged. Thus, we have $\bar{u}^t_l=\underline{\bar{u}}^t_l$.
	\end{itemize}

	\noindent
	\textbf{(4)} The $s_l\notin S_w^t$ if it asks truthfully: If losing because $c^t_l\geq a^t_{j_\phi}$, we consider the following two sub-cases:
	\begin{itemize}
		\item (4.a) $a^t_l\geq a^t_{j_\phi}$: Now, we have $\underline{a}^t_{j_\phi}= a^t_{j_\phi}$ and $a^t_l\geq \underline{a}^t_{j_\phi}$. Thus, we have $\bar{u}^t_l=\underline{\bar{u}}^t_l=0$.
		\item (4.b) $a^t_l< a^t_{j_\phi}$: It is easy to infer that $a^t_l\leq\underline{a}^t_{j_\phi}\leq a^t_{j_\phi}$. If there is no buyer assigned to it, we have $\bar{u}^t_l=\underline{\bar{u}}^t_l=0$; Else, the rewarded payment $\underline{\bar{p}}^t_l=\underline{a}^t_{j_\phi}\leq c^t_l$. Thus, we have $\bar{u}^t_l=0\geq\underline{\bar{u}}^t_l$.
	\end{itemize}
	
	\noindent
	If $c^t_l<a^t_{j_\phi}$ but still losing, it is either $s_l\notin S^t_c$ or $\sigma^t(d_{k_1})\neq s_l$ where $d^t_{k_1,l}=Q^t_l[1]$. We consider the following two sub-cases:
	\begin{itemize}
		\item (4.c) $a^t_l\geq a^t_{j_\phi}$: It is easy to infer that $a^t_l\geq\underline{a}^t_{j_\phi}\geq a^t_{j_\phi}$. Thus, we have $\bar{u}^t_l=\underline{\bar{u}}^t_l=0$.
		\item (4.d) $a^t_l< a^t_{j_\phi}$: Now, we have $\underline{a}^t_{j_\phi}= a^t_{j_\phi}$ and $a^t_l< \underline{a}^t_{j_\phi}$. It is either $s_l\notin \underline{S}^t_c$ or $\underline{\sigma}^t(d_{k_1})\neq s_l$ where $d^t_{k_1,l}=\underline{Q}^t_l[1]$ as well. Thus, we have $\bar{u}^t_l=\underline{\bar{u}}^t_l=0$.
	\end{itemize}

	Therefore, both buyers and sellers are truthful, which leads to the MIDA is truthful.
\end{proof}

\begin{thm}
	The MIDA is individually rational, budget balanced, computationally efficient, and truthful.
\end{thm}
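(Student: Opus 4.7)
The plan is straightforward: the theorem is a direct aggregation of the four preceding lemmas, so the proof only needs to collect them and observe that together they cover the four clauses in the statement. Concretely, Lemma~1 supplies individual rationality by showing $\hat{u}_k^t(\Omega^t)\geq 0$ for every winning buyer (through $b_{k,\sigma^t(k)}^t\geq\hat{p}_k^t$) and $\bar{u}_l^t(\Omega^t)\geq 0$ for every winning seller (through $\bar{p}_l^t=a_{j_\phi}^t>a_l^t$); losing players get zero utility by definition, so the inequality holds trivially for them.

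Next I would invoke Lemma~2 to obtain budget balance, noting that because MIDA produces a bijection between $D_w^t$ and $S_w^t$, inequality~(\ref{eq6}) collapses to (\ref{eq9}), and each summand is nonnegative because $\hat{p}_k^t\geq a_{j_\phi}^t\geq \bar{p}_{\sigma^t(k)}^t$. Computational efficiency then follows from Lemma~3: MIDA-WCD runs in $O(m\log m+|E|)$ and MIDA-AP runs in $O(\phi\cdot n\log n)$, so the overall running time is polynomial in $n$, $m$, and $|E|$.

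The only substantive ingredient is truthfulness, which is already handled by Lemma~4. My proof would simply cite it, emphasizing that the lemma exhausts all relevant deviations on both sides of the market: for a buyer $d_k$ the four sub-cases (1.a)--(1.d) and case~(2) cover the possibilities of over/under-bidding whether or not $d_k$ would have won truthfully, and for a seller $s_l$ the sub-cases (3.a)--(3.b) and (4.a)--(4.d) cover the analogous ask deviations relative to the threshold $a_{j_\phi}^t$. Since no unilateral deviation strictly improves utility, truthful bidding and truthful asking are dominant strategies.

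Given that every clause of the theorem is established by a named lemma, I do not anticipate any real obstacle; the only thing to be careful about is making sure the statement of the theorem matches the union of the four lemmas exactly, so that nothing is claimed beyond what has been proved. The write-up can therefore be a single short paragraph of the form ``Combining Lemmas~1--4 yields the four properties simultaneously, which is precisely the claim of the theorem.''
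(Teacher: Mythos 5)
Your proposal is correct and matches the paper's proof exactly: the theorem is proved by combining Lemmas~1--4, one for each of the four properties. The extra detail you provide is just a recapitulation of those lemmas' arguments and does not change the approach.
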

\begin{proof}
	It can be derived from Lemma 1 to Lemma 4.
\end{proof}

\subsection{Walk-through Example and Inference Attack}
	We give a walk-through example to demonstrate how our MIDA mechanism works. The bid information of buyers and ask information of sellers are shown in Table \ref{table1}. In the time slot $t$, we assume each buyer satisfies the Constraint (5d) and each player bids (asks) truthfully. In the MIDA-WCD process, the threshold is $a^t_{j_\phi}=a^t_3=4$, thereby we have $S^t_c=\{s_2,s_5,s_6\}$ and $D^t_c=\{d^t_{1,2},d^t_{2,5},d^t_{3,6},d^t_{4,2},d^t_{4,5},d^t_{5,6}\}$. In the MIDA-AP process, for each winning seller candidate, we have $Q_2^t=\langle d^t_{4,2},d^t_{1,2}\rangle$ since $b^t_{4,2}\cdot r^t_4=24>b^t_{1,2}\cdot r^t_1=20$. Similarly, we have $Q_5^t=\langle d^t_{4,5},d^t_{2,5}\rangle$ and $Q_6^t=\langle d^t_{3,6},d^t_{5,6}\rangle$. The prices charged to them should be $\hat{p}^t_{4,2}=\max\{a^t_{j_\phi},b^t_{1,2}\cdot(r^t_1/r^t_4)\}=5$, $\hat{p}^t_{4,5}=\max\{a^t_{j_\phi},b^t_{2,5}\cdot(r^t_2/r^t_4)\}=4$, and $\hat{p}^t_{3,6}=\max\{a^t_{j_\phi},b^t_{5,6}\cdot(r^t_5/r^t_3)\}=4$. For the buyer $d_4$, we have $\hat{u}^t_{4,2}=(v^t_{4,2}-\hat{p}^t_{4,2})\cdot r^t_4=4$ and $\hat{u}^t_{4,5}=(v^t_{4,5}-\hat{p}^t_{4,5})\cdot r^t_4=8$. To maximize its utility, the buyer $d_4$ is assigned to seller $s_5$. The published auction result is $D^t_w=\{d_3,d_4\}$, $S^t_w=\{s_5,s_6\}$, $\sigma^t(d_3)=s_6$, $\sigma^t(d_4)=s_5$, $\hat{P}^t_w=\{\hat{p}^t_3,\hat{p}^t_4\}=\{4,4\}$, $\bar{P}^t_w=\{\bar{p}^t_2,\bar{p}^t_6\}=\{4,4\}$, and $\{r_3^t,r^t_4\}=\{6,4\}$. The social welfare can be written as 
	\begin{equation}
		\sum_{d_k\in D^t_w}\left(b^t_{k,\sigma^t(d_k)}-a^t_{\sigma^t(d^k)}\right)\cdot r^t_k=24
	\end{equation}

	\begin{table}[!t]
		\renewcommand{\arraystretch}{1.2}
		\caption{A walk-through example with 5 buyers and 7 sellers.}
		\label{table1}
		\centering
		\begin{tabular}{c|ccccccc|c}
			\hline
			$b_{k,l}^t(v_{k,l}^t)$ & $s_1$ & $s_2$ & $s_3$ & $s_4$ & $s_5$ & $s_6$ & $s_7$ & $r_k^t$\\
			\hline
			$d_1$ & 0 & 4 & 0 & 5 & 2 & 0 & 0 & 5\\
			$d_2$ & 2 & 0 & 0 & 0 & 5 & 1 & 0 & 2\\
			$d_3$ & 7 & 0 & 5 & 0 & 0 & 4 & 0 & 6\\
			$d_4$ & 0 & 6 & 4 & 0 & 6 & 0 & 0 & 4\\
			$d_5$ & 0 & 0 & 0 & 2 & 0 & 4 & 5 & 3\\
			\hline
			$a^t_l(c^t_l)$ & 6 & 1 & 4 & 5 & 3 & 2 & 5 & -\\
			\hline
			$g^t_l$ & 3 & 7 & 6 & 5 & 8 & 7 & 6 & -\\
			\hline
		\end{tabular}
	\end{table}

	\textbf{Case 1:} As mentioned before, there are two kinds of inference attack. First, we consider a seller adversary infers the unit asks of other sellers in the time slot $t+1$. We assume that seller $s_3$ is the adversary who gives an untruthful ask $a^{t+1}_3=2$ and other players remain the same as the last slot. Now, we have $\underline{a}^{t+1}_{j_\phi}=a^{t+1}_5=3$, $\underline{S}^{t+1}_c=\{s_2,s_3,s_6\}$, and $\underline{D}^{t+1}_c=\{d^{t+1}_{1,2},d^{t+1}_{3,3},d^{t+1}_{3,6},d^{t+1}_{4,2},d^{t+1}_{4,3},d^{t+1}_{5,6}\}$. The published auction result will be $\underline{D}^{t+1}_w=\{d_3,d_4\}$, $\underline{S}^{t+1}_w=\{s_2,s_3\}$, $\underline{\sigma}^{t+1}(d_3)=s_3$, $\underline{\sigma}^{t+1}(d_4)=s_2$, $\underline{\hat{P}}^{t+1}_w=\{\hat{p}^{t+1}_3,\hat{p}^{t+1}_4\}=\{4,5\}$, $\underline{\bar{P}}^{t+1}_w=\{\bar{p}^{t+1}_2,\bar{p}^{t+1}_6\}=\{4,4\}$, and $\{r_3^{t+1},r^{t+1}_4\}=\{6,4\}$. Now, we can observe that the seller $s_5$ is not a winning seller. Besides, the seller $s_5$ is not in current seller candidate set $\underline{S}^{t+1}_c$ because the buyer $d_4$ will be assigned to it if $s_5\in \underline{S}^{t+1}_c$ based on the result of last slot. Since seller $s_3$ decreases its ask, seller $s_5$ is removed from the seller candidate set. According to current threshold payment to sellers, it is easy to infer that $a^{t+1}_5=\underline{a}^{t+1}_{j_\phi}=3$. Then, the privacy of seller $s_5$ has been threatened.
	
	\textbf{Case 2:} Next, we consider a buyer adversary infers the unit bids of other buyers in the time slot $t+1$. We assume that buyer $d_1$ is the adversary who gives an untruthful bid $b^{t+1}_{1,5}=6$ and other players remain the same as the last slot. Now, we have $\underline{a}^{t+1}_{j_\phi}=a^{t+1}_3=4$, $\underline{S}^{t+1}_c=\{s_2,s_5,s_6\}$, and $\underline{D}^{t+1}_c=\{d^{t+1}_{1,2},d^{t+1}_{1,5},d^{t+1}_{2,5},d^{t+1}_{3,6},d^{t+1}_{4,2},d^{t+1}_{4,5},d^{t+1}_{5,6}\}$. The published auction result will be $\underline{D}^{t+1}_w=\{d_1,d_3,d_4\}$, $\underline{S}^{t+1}_w=\{s_2,s_5,s_6\}$, $\underline{\sigma}^{t+1}(d_1)=s_5$, $\underline{\sigma}^{t+1}(d_3)=s_6$, $\underline{\sigma}^{t+1}(d_4)=s_2$, $\underline{\hat{P}}^{t+1}_w=\{\hat{p}^{t+1}_1,\hat{p}^{t+1}_3,\hat{p}^{t+1}_4\}=\{4.8,4,5\}$, $\underline{\bar{P}}^{t+1}_w=\{\bar{p}^{t+1}_2,\bar{p}^{t+1}_5,\bar{p}^{t+1}_6\}=\{4,4,4\}$, and $\{r_1^{t+1},r_3^{t+1},r^{t+1}_4\}=\{5,6,4\}$. Now, we can observe that the buyer $d_4$ is assigned to seller $s_2$ instead of $s_5$. Thus, buyer $d^{t+1}_{1,5}$ replaces the top position of $d^{t+1}_{4,5}$ in the sorted list $Q_5^{t+1}$. According to the price charged to $d_1$, we have $\hat{p}^{t+1}_1=\hat{p}^{t+1}_{1,5}=b^{t+1}_{4,5}\cdot (r^{t+1}_4/r^{t+1}_1)$. It is easy to infer that $b^{t+1}_{4,5}=\hat{p}^{t+1}_1\cdot (r^{t+1}_1/r^{t+1}_4)=6$. Then, the privacy of buyer $d_4$ has been threatened.

\section{Differentially Private Online Double Auction Design}
To protect the privacy of both buyers and sellers, we improve our MIDA mechanism by using the technology of differential privacy. We first introduce several important concepts about differential privacy.

\subsection{Differential Privacy}
    Differential privacy \cite{dwork2008differential} is a technology to guarantee that an adversary is not capable of distinguishing between two neighboring inputs with high probability. The neighboring databases means two data sets $O=\{o_1,o_2,\cdots,o_{|O|}\}$ and $O'=\{o'_1,o'_2,\cdots,o'_{|O|}\}$ which have exactly one different element. In differentially private protection, it is possible that two neighboring inputs have the same or similar output. Thus, adversaries are hard to infer other private inputs according to public query results. Let us look at its definition.
    \begin{defn}[Differential Privacy]
    	An algorithm (query function) $f$ gives $\varepsilon$-differential privacy if and only if, for any two neighboring inputs $O$ and $O'$, we have
    	\begin{equation}
    		\Pr[f(O)\in R]\leq\exp(\varepsilon)\cdot\Pr[f(O')\in R]
    	\end{equation}
    	where $R$ is a fixed range such that $R\subseteq Range(f)$ and $\varepsilon$ is called privacy budget.
    \end{defn}
	\noindent
	The privacy budget is a parameter used to control the degree of privacy protection that an algorithm gives. Usually, a smaller privacy budget implies a stronger privacy protection. The sensitivity of an algorithm $f$ quantifies the magnitude of noise that is needed to protect the data from adversaries.
	\begin{defn}[Sensitivity]
		The $\ell_1$-sensitivity of an algorithm $f$ is defined as
		\begin{equation}
			\Delta f=\max_{O,O'\in dom(f)}||f(O)-f(O')||_1
		\end{equation}
	\end{defn}
	Based on this definition, the sensitivity is an upper bound we need to perturb the output of $f$ to protect privacy. The noise is generally sampled from a Laplace distribution. A random variable $X$ subjected to Laplace distribution, denoted by $X\sim\text{Lap}(\mu,b)$, has a probability density function
	\begin{equation}
		\text{Lap}(x|\mu,b)=\frac{1}{2b}\exp\left(-\frac{|x-\mu|}{b}\right)
	\end{equation}
	where $\mu$ is the center point and $b$ is the scaling factor \cite{zhu2017preliminary}. Now, we can define the Laplace mechanism. The Laplace mechanism runs an algorithm $f$ directly and then adds a Laplace noise sampled from the Laplace distribution.
	\begin{defn}[Laplace Mechanism]
		Given an algorithm (query function) $f:dom(f)\rightarrow\mathbb{R}$, the Laplace mechanism $\mathcal{M}_L(x,f,\varepsilon)$ can be defined as
		\begin{equation}
			\mathcal{M}_L(x,f,\varepsilon)=f(x)+\text{\rm Lap}\left(0,\frac{\Delta f}{\varepsilon}\right)
		\end{equation}
		where $x\in dom(f)$ and ${\rm Lap}(\Delta f/\varepsilon)$ is a random noise sampled from the Laplace distribution.
	\end{defn}

\begin{algorithm}[!t]
	\caption{\text{MIDA-DP $(\Omega^t)$}}\label{a4}
	\begin{algorithmic}[1]
		\renewcommand{\algorithmicrequire}{\textbf{Input:}}
		\renewcommand{\algorithmicensure}{\textbf{Output:}}
		\REQUIRE $\Omega^t=(\{\mathcal{B}_i^t\}_{d_i\in D},\{\mathcal{A}_j^t\}_{s_j\in S})$, $\varepsilon$
		\ENSURE $D_w^t,S_w^t,\sigma^t,\hat{P}_w^t,\bar{P}_w^t$
		\STATE $(D_c^t,S_c^t,\tilde{a}_{j_\phi}^t)\leftarrow$ MIDA-WCD-DP $(\Omega^t,\varepsilon)$
		\STATE $(D_w^t,S_w^t,\sigma^t,\hat{P}_w^t,\bar{P}_w^t)\leftarrow$ MIDA-AP $(\Omega^t,D_c^t,S_c^t,\tilde{a}_{j_\phi}^t)$
		\RETURN $(D_w^t,S_w^t,\sigma^t,\hat{P}_w^t,\bar{P}_w^t)$ 
	\end{algorithmic}
\end{algorithm}

\subsection{Algorithm Design}
Here, we make some minor changes to the MIDA mechansim to ensure the private security of both buyers and sellers. The differentially private multi-item double auction (MIDA-DP) mechanism is shown in Algorithm \ref{a4}, whose winning candidate determination part (MIDA-WCD-AP) is shown in Algorithm \ref{a5}. Shown as Algorithm \ref{a5}, we select the meadian $a^t_{j_\phi}$ as a threshold first and add an Laplace noise sampled from the Laplace distribution ${\rm Lap}(0,\Delta_1/\varepsilon)$ to get an updated threshold $\tilde{a}_{j_\phi}^t$. Then, we use the updated threshold to select candidate winning buyer and seller set.

\subsection{Theoretical Analysis and Privacy Protection}
First, we give the analysis about privacy protection in our MIDA-DP mechanism.

\begin{lem}
	The MIDA-DP gives $\varepsilon$-differential privacy for the asks of sellers.
\end{lem}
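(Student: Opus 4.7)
The plan is to recognize MIDA-DP as an instantiation of the Laplace mechanism on the median-ask query, and then transfer the resulting differential-privacy guarantee to the full output via closure under post-processing.

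First I would formally isolate the query whose output is perturbed: the asks $\{a^t_j\}_{s_j \in S}$ enter MIDA-DP's random step only through the sorted-median operator $f(\Omega^t) = a^t_{j_\phi}$ used inside MIDA-WCD-DP, after which the noisy threshold $\tilde{a}^t_{j_\phi} = f(\Omega^t) + \mathrm{Lap}(0, \Delta_1/\varepsilon)$ is drawn and passed downstream. Everything the algorithm does afterwards --- building $(D_c^t, S_c^t)$ from the noisy threshold and invoking MIDA-AP on it --- consumes no randomness beyond this single noise draw.

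Next I would bound the $\ell_1$-sensitivity $\Delta_1$ of $f$. Because asks lie in $[a_{min}, a_{max}]$ and replacing a single coordinate in the sorted list shifts the median by at most one adjacent order-statistic, one obtains $\Delta_1 \leq a_{max} - a_{min}$; this is the value that must be used when scaling the Laplace noise. The standard ratio argument --- for any point $x$ and neighbors $\Omega, \Omega'$,
\[
\frac{\exp(-|x - f(\Omega)|\varepsilon/\Delta_1)}{\exp(-|x - f(\Omega')|\varepsilon/\Delta_1)} \leq \exp\!\left(\varepsilon \frac{|f(\Omega)-f(\Omega')|}{\Delta_1}\right) \leq e^\varepsilon,
\]
then yields that the release of $\tilde{a}^t_{j_\phi}$ is $\varepsilon$-differentially private in the ask vector. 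Finally, because the remaining steps in MIDA-WCD-DP and MIDA-AP are deterministic given $\tilde{a}^t_{j_\phi}$, closure of differential privacy under post-processing transfers the $\varepsilon$-DP guarantee to the full output tuple $(D_w^t, S_w^t, \sigma^t, \hat{P}_w^t, \bar{P}_w^t)$.

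The main obstacle I anticipate is making the post-processing step airtight: the construction of $(D_c^t, S_c^t)$ tests each raw $a^t_l$ against $\tilde{a}^t_{j_\phi}$, so strictly it reads the ask vector again rather than only consuming the noisy median. The cleanest resolution is to verify the DP inequality directly by conditioning on $\tilde{a}^t_{j_\phi}$: for any fixed realization of the noise the conditional output is a deterministic function of the input, so the output-distribution ratio between neighbors factors through the Laplace density ratio, which has already been bounded pointwise by $e^\varepsilon$. This bookkeeping, rather than the Laplace step itself, is where I expect the effort to concentrate.
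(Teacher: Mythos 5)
Your central argument is exactly the paper's: view the median-ask computation as a query $f$ with $\ell_1$-sensitivity $\Delta_1 = a_{max}-a_{min}$, add $\mathrm{Lap}(0,\Delta_1/\varepsilon)$ noise, and bound the density ratio at any output point by $\exp\left(\varepsilon\,|f(\vec{a}^t)-f(\vec{a}^t_*)|/\Delta_1\right)\leq e^{\varepsilon}$. That is precisely the computation in the paper's proof of Lemma 5, which in fact stops there: it establishes $\varepsilon$-DP only for the release of the noisy threshold $\tilde{a}^t_{j_\phi}$ and does not attempt to extend the guarantee to the full output tuple.

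Where you go beyond the paper is in trying to push the guarantee through to $(D_w^t,S_w^t,\sigma^t,\hat{P}_w^t,\bar{P}_w^t)$ via post-processing, and you correctly identify the obstruction: the candidate-set construction re-reads the raw asks by testing $a^t_l<\tilde{a}^t_{j_\phi}$, so the downstream computation is not a function of the noisy threshold alone. However, your proposed repair does not close this gap. Conditioning on a fixed noise realization makes the output a deterministic function $g(\vec{a}^t,\tilde{a}^t_{j_\phi})$ of the \emph{input}, and for neighboring inputs these deterministic functions can disagree; the set of noise values mapping $\vec{a}^t$ to a given output $o$ need not coincide with the set mapping $\vec{a}^t_*$ to $o$, so the output-distribution ratio does not factor through the Laplace density ratio. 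Concretely, membership of $s_l$ in $S_c^t$ reveals the event $a^t_l<\tilde{a}^t_{j_\phi}$, which leaks information about $a^t_l$ that is not mediated by the perturbed median, and no pointwise bound on the threshold's density ratio controls it. So your proof is correct for the statement the paper actually proves (privacy of the released threshold), but the extension you sketch in the last paragraph contains a genuine gap --- one that is present, unacknowledged, in the paper's own treatment as well.
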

\begin{proof}
	The process of MIDA-WCD-DP shown as Algorithm \ref{a5} can be regarded as an algorithm $f:[a_{min},a_{max}]^m\rightarrow[a_{min},a_{max}]$ where the input is the asks of sellers $\vec{a}^t=\{a_1^t,a_2^t,\cdots,a^t_m\}\in[a_{min},a_{max}]^m$ and the output is the threshold $a^t_{j_\phi}\in[a_{min},a_{max}]$ in the time slot $t$. Let us consider two neighboring inputs $\vec{a}^t$ and $\vec{a}^t_*$ that has exactly one different ask. For any value $\tilde{a}_{j_\phi}^t\in\mathbb{R}$, we have
	\begin{align}
		&\frac{\Pr[\mathcal{M}_L(\vec{a}^t,f,\varepsilon)=\tilde{a}_{j_\phi}^t]}{\Pr[\mathcal{M}_L(\vec{a}^t_*,f,\varepsilon)=\tilde{a}_{j_\phi}^t]}\nonumber\\
		&\quad=\frac{\exp(-{\varepsilon|f(\vec{a}^t)-\tilde{a}_{j_\phi}^t|}/{\Delta_1})}{\exp(-{\varepsilon|f(\vec{a}^t_*)-\tilde{a}_{j_\phi}^t|}/{\Delta_1})}\nonumber\\
		&\quad=\exp\left({\varepsilon(|f(\vec{a}^t)-\tilde{a}_{j_\phi}^t|-|f(\vec{a}^t_*)-\tilde{a}_{j_\phi}^t|)}/{\Delta_1}\right)\nonumber\\
		&\quad\leq \exp\left(\varepsilon|f(\vec{a}^t)-f(\vec{a}^t_*)|/{\Delta_1}\right)\nonumber\\
		&\quad\leq \exp\left(\varepsilon{\Delta_1}/{\Delta_1}\right)\nonumber\\
		&\quad=\exp(\varepsilon)\nonumber
	\end{align}
	becuase we have known $|f(\vec{a}^t)-f(\vec{a}^t_*)|\leq\Delta_1$. By symmetry, we have ${\Pr[\mathcal{M}_L(\vec{a}^t,f,\varepsilon)=\tilde{a}_{j_\phi}^t]}/{\Pr[\mathcal{M}_L(\vec{a}^t_*,f,\varepsilon)=\tilde{a}_{j_\phi}^t]}\geq\exp(-\varepsilon)$ easily.
\end{proof}

Then, we give the qualitative analysis about privacy protection based on the aforementioned two cases of inference attack discussed in Sec. IV-D. For the Case 1, even though it can infer that the seller $s_5$ is not in current winning seller candidate set, the adversary cannot conclude with high confidence that $a^{t+1}_5$ is the maximum one higher than or equal to the threshold. Besides, the threshold $\tilde{a}_{j_\phi}^t$ has been perturbed. Even though $a^{t+1}_5$ is the maximum one higher than or equal to the threshold, the adversary cannot know what the $a^{t+1}_5$ is. For the Case 2, because of the uncertainty of threshold $a^t_{j_\phi}$ and $\underline{a}^{t+1}_{j_\phi}$, the adversary cannot make sure that $d^{t+1}_{4,5}$ is in current winning buyer candidate set. Thus, it is hard to infer that $\hat{p}^{t+1}_{1,5}=b^{t+1}_{4,5}\cdot (r^{t+1}_4/r^{t+1}_1)$. The adversary cannot know what the $b^{t+1}_{4,5}$ is. The privacy security of both buyers and sellers has been guaranteed.

\begin{algorithm}[!t]
	\caption{\text{MIDA-WCD-DP $(\Omega^t)$}}\label{a5}
	\begin{algorithmic}[1]
		\renewcommand{\algorithmicrequire}{\textbf{Input:}}
		\renewcommand{\algorithmicensure}{\textbf{Output:}}
		\REQUIRE $\Omega^t=(\{\mathcal{B}_i^t\}_{d_i\in D},\{\mathcal{A}_j^t\}_{s_j\in S})$, $\varepsilon$
		\ENSURE $D_c^t,S_c^t,\tilde{a}_{j_\phi}^t$
		\STATE $D_c^t\leftarrow\emptyset,S_c^t\leftarrow\emptyset$
		\STATE Construct a set $D^t_*=\{d^t_{k,l}:d^t_{k,l}\text{ satisfies (7) (8)}\}$
		\STATE Sort the sellers such that $S^t_*=\langle s_{j_1}^t,s_{j_2}^t,\cdots,s_{j_m}^t\rangle$ where $a_{j_1}^t\leq a_{j_2}^t\leq\cdots\leq a_{j_m}^t$	
		\STATE Find the median ask $a_{j_\phi}^t$ of $S^t_*$, $\phi=\left\lceil\frac{m+1}{2}\right\rceil$
		\STATE $\Delta_1\leftarrow(a_{max}-a_{min})$
		\STATE $\tilde{a}_{j_\phi}^t\leftarrow a_{j_\phi}^t+\text{Lap}(0,\Delta_1/\varepsilon)$
		\FOR {each $d^t_{k,l}\in D^t_*$}
		\IF {$b^t_{k,l}\geq \tilde{a}_{j_\phi}^t$ and $a_l^t<\tilde{a}_{j_\phi}^t$}
		\STATE $D_c^t\leftarrow D_c^t\cup\{d^t_{k,l}\}$
		\IF {$s_l\notin S_c^t$}
		\STATE $S_c^t\leftarrow S_c^t\cup\{s_l\}$
		\ENDIF
		\ENDIF
		\ENDFOR
		\RETURN $(D_c^t,S_c^t,\tilde{a}_{j_\phi}^t)$
	\end{algorithmic}
\end{algorithm}

\begin{thm}
	The MIDA-DP is individually rational, budget balanced, computationally efficient, and truthful. Moreover, it protects the privacy of buyers and sellers.
\end{thm}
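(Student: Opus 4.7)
The plan is to prove the theorem by reducing MIDA-DP to MIDA run with a randomized threshold. Observe that MIDA-DP differs from MIDA in exactly one place: after computing the median ask $a_{j_\phi}^t$, it perturbs it to $\tilde{a}_{j_\phi}^t = a_{j_\phi}^t + \text{Lap}(0,\Delta_1/\varepsilon)$ and then feeds $\tilde{a}_{j_\phi}^t$ into the same MIDA-AP subroutine. Thus, the four mechanism-design properties should inherit from Lemmas 1--4 essentially verbatim, as long as those arguments never used the specific value of the threshold, only its structural role as a common cutoff that (i) filters the candidate sets and (ii) simultaneously sets the buyer price floor and the seller reward. The privacy claim then splits into a rigorous part for sellers (via Lemma 5) and a structural, attack-based part for buyers.

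First I would verify individual rationality, budget balance, and computational efficiency. By the candidate-selection rule in Algorithm 5, every winning buyer candidate satisfies $b_{k,l}^t \geq \tilde{a}_{j_\phi}^t$, and every winning seller candidate has ask satisfying $a_l^t < \tilde{a}_{j_\phi}^t$. Since MIDA-AP charges each winning buyer $\hat{p}_k^t = \max\{\tilde{a}_{j_\phi}^t, b_{k_2,l}^t \cdot (r_{k_2}^t/r_{k}^t)\}$ and rewards each winning seller $\bar{p}_l^t = \tilde{a}_{j_\phi}^t$, the inequalities $\hat{u}_k^t \geq 0$ and $\bar{u}_l^t \geq 0$ go through exactly as in Lemma 1, and $\hat{p}_k^t \geq \bar{p}_{\sigma^t(k)}^t$ holds pointwise, giving budget balance as in Lemma 2. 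Computational efficiency is immediate: one Laplace sample is $O(1)$ on top of the polynomial MIDA runtime.

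Next I would revisit the truthfulness argument of Lemma 4. The crucial observation is that buyer bids do not enter the median computation at all, so $\tilde{a}_{j_\phi}^t$ is independent of any individual buyer's report; consequently sub-cases (1.a)--(1.d) and (2) carry over unchanged with $a_{j_\phi}^t$ replaced by its perturbed value, and buyer truthfulness holds pointwise in the noise draw. For sellers, changing $s_l$'s ask may shift the median, but the sub-case analysis (3.a)--(4.d) only used the qualitative ordering between $a_l^t$, $c_l^t$, and the realized threshold, and this ordering transfers directly once we substitute $\tilde{a}_{j_\phi}^t$ throughout. Hence truthfulness is preserved for every realization of the Laplace noise, and therefore in expectation.

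Finally, privacy. For sellers, Lemma 5 already establishes $\varepsilon$-differential privacy, since $\tilde{a}_{j_\phi}^t$ is the output of the Laplace mechanism applied to a query with $\ell_1$-sensitivity at most $\Delta_1 = a_{max}-a_{min}$, and every subsequent step of MIDA-DP is post-processing of $\tilde{a}_{j_\phi}^t$ together with data that is either public ($r_i^t, g_j^t$) or the target's own bid. For buyers, I would give the attack-level argument already sketched after Lemma 5: the Case 1 and Case 2 reconstructions both relied on the adversary knowing the threshold exactly, and randomizing that threshold breaks the deterministic equations by which an adversary would solve for another player's bid or ask. The main obstacle I anticipate is the truthfulness step for sellers, where one must justify that a strategic seller cannot exploit the randomness of $\tilde{a}_{j_\phi}^t$ to gain utility on average; the cleanest way is to show that for every noise realization the sign of $a_l^t - \tilde{a}_{j_\phi}^t$ governs membership identically to MIDA, so the pointwise dominant-strategy property of Lemma 4 applies realization-by-realization and hence in expectation, leaving no slack that the seller could exploit by misreporting.
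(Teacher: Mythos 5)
Your proposal takes essentially the same route as the paper: the four mechanism-design properties are inherited from Lemmas 1--4 with the perturbed threshold $\tilde{a}_{j_\phi}^t$ playing the structural role of $a_{j_\phi}^t$, seller privacy comes from Lemma 5, and buyer privacy is argued qualitatively from the dependence of the buyer-side outcome on the randomized threshold. Your write-up is in fact more careful than the paper's one-paragraph proof --- in particular the realization-by-realization coupling argument showing truthfulness holds pointwise in the Laplace noise is a detail the paper glosses over --- but it is the same decomposition and the same key ingredients.
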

\begin{proof}
	The MIDA-PD satisfies the design rationales from Definition 1 to Definition 4 by similar proofs from Lemma 1 to Lemma 4. According to Lemma 5, it gives $\varepsilon$-differential privacy to the sellers. Because the winning buyers and their charged prices are dependent on the winning seller candidate set, the privacy of buyers would be protected as well.
\end{proof}

\section{Problem Extension}
For the previous problem defined in Sec. \uppercase\expandafter{\romannumeral3}-B, it is just a special case where each edge server (seller) can serve at more one device in a slot. Actually, an edge server could serve more than one devices. Thus, in a more general scenario, the Constraint (5a) should be removed, and thus, the function $\sigma^t$ from $D_w^t$ to $S_w^t$ is not a bijection, but a many-to-one mapping. To distinguish from MIDA mechanism, the algorithm to solve this general case is named as MIDA-G mechanism shown in Algorithm \ref{a6}. Its winning candidate determination is the same as MIDA-WCD shown in Algorithm \ref{a2}, but its assignment \& price (MIDA-G-AP) is shown in Algorithm \ref{a7}. Here, it exists a tentative set $Q_j^t$ for each $s_j\in S_c^t$ where each buyer in $Q_j^t$ could be assigned to seller $s_j$. Naturally, we have $\sum_{d_i\in Q_j^t}r_i^t \leq q^t_j$. The charged price given in line 9 and 17 is used to guarantee the truthfulness. 

\begin{algorithm}[!t]
	\caption{\text{MIDA-G $(\Omega^t)$}}\label{a6}
	\begin{algorithmic}[1]
		\renewcommand{\algorithmicrequire}{\textbf{Input:}}
		\renewcommand{\algorithmicensure}{\textbf{Output:}}
		\REQUIRE $\Omega^t=(\{\mathcal{B}_i^t\}_{d_i\in D},\{\mathcal{A}_j^t\}_{s_j\in S})$
		\ENSURE $D_w^t,S_w^t,\sigma^t,\hat{P}_w^t,\bar{P}_w^t$
		\STATE $(D_c^t,S_c^t,a_{j_\phi}^t)\leftarrow$ MIDA-WCD $(\Omega^t)$
		\STATE $(D_w^t,S_w^t,\sigma^t,\hat{P}_w^t,\bar{P}_w^t)\leftarrow$ MIDA-G-AP $(\Omega^t,D_c^t,S_c^t,a_{j_\phi}^t)$
		\RETURN $(D_w^t,S_w^t,\sigma^t,\hat{P}_w^t,\bar{P}_w^t)$ 
	\end{algorithmic}
\end{algorithm}

By similar induction as the MIDA mechanism, we also have that the MIDA-G mechanism is individually rational, budget balanced, computationally efficient, and truthful. The differentially private strategy shown in Sec. \uppercase\expandafter{\romannumeral5} can be used in our MIDA-G mechanism to protect the privacy of buyers and sellers for the same reason. Thus, the MIDG-G-DP mechanism can be formulated by replacing MIDA-WCD in Algorithm \ref{a6} with MIDA-WCD-DP shown as Algorithm \ref{a5}.

\begin{algorithm}[!t]
	\caption{\text{MIDA-G-AP $(\Omega^t,D_c^t,S_c^t,a_{j_\phi}^t)$}}\label{a7}
	\begin{algorithmic}[1]
		\renewcommand{\algorithmicrequire}{\textbf{Input:}}
		\renewcommand{\algorithmicensure}{\textbf{Output:}}
		\REQUIRE $\Omega^t,D_c^t,S_c^t,a_{j_\phi}^t$
		\ENSURE $D_w^t,S_w^t,\sigma^t,\hat{P}_w^t,\bar{P}_w^t$
		\STATE $D_w^t\leftarrow\emptyset,S_w^t\leftarrow\emptyset,\hat{P}_w^t\leftarrow\emptyset,\bar{P}_w^t\leftarrow\emptyset$
		\STATE Create a sorted list $Q_j^t=\langle d^t_{k,j}:d^t_{k,j}\in D_c^t\rangle$ for each $s_j\in S_c^t$ such that $b_{k_1,j}^t\cdot r_{k_1}\geq b_{k_2,j}^t\cdot r_{k_2}\geq\cdots$
		\FOR {each $s_l\in S_c^t$}
		\IF {$\sum_{d^t_{k,l}\in Q_l^t}r_k^t\leq q_l^t$}
		\FOR {each $d^t_{k,l}\in Q_l^t$}
		\IF {$d_k\in D_w^t$}
		\STATE $D^t_w\leftarrow D^t_w\cup\{d_k\}$
		\ENDIF
		\STATE $\hat{p}^t_{k,l}\leftarrow a^t_{j_\phi}$
		\ENDFOR
		\ELSE
		\STATE Let $k_m$ be the maximum value such that it satisfies contraint $\sum_{d^t_{k,l}\in Q_l^t[1,\cdots,k_m]}r_k\leq q_l^t$
		\FOR {each $d^t_{k,l}\in Q_l^t[1,\cdots,k_m]$}
		\IF {$d_k\in D_w^t$}
		\STATE $D^t_w\leftarrow D^t_w\cup\{d_k\}$
		\ENDIF
		\STATE $\hat{p}^t_{k,l}\leftarrow\max\{a^t_{j_\phi},b^t_{k_m+1,l}\cdot(r^t_{k_m+1}/r^t_k)\}$
		\ENDFOR
		\STATE $Q_l^t\leftarrow Q_l^t[1,\cdots,k_m]$
		\ENDIF
		\ENDFOR
		\FOR {each $d_k\in D_w^t$}
		\STATE $H^t_k=\{s_l:s_l\in S_c^t,d^t_{k,l}\in Q_l^t\}$
		\STATE Find $s_{l'}\leftarrow\arg\max_{s_l\in H_k^t}\{(b_{k,l}^t-\hat{p}_{k,l}^t)\cdot r_k^t\}$
		\STATE $\sigma^t(d_k)\leftarrow s_{l'}$
		\STATE $\hat{p}^t_k\leftarrow\hat{p}^t_{k,l'}$, $\hat{P}_w^t\leftarrow\hat{P}_w^t\cup\{\hat{p}^t_k\}$
		\IF {$s_{l'}\notin S_w^t$}
		\STATE $S_w^t\leftarrow S_w^t\cup\{s_{l'}\}$
		\STATE $\bar{p}^t_{l'}\leftarrow a_{j_\phi}^t$, $\bar{P}_w^t\leftarrow\bar{P}_w^t\cup\{\bar{p}^t_{l'}\}$
		\ENDIF
		\ENDFOR
		\RETURN $(D_w^t,S_w^t,\sigma^t,\hat{P}_w^t,\bar{P}_w^t)$
	\end{algorithmic}
\end{algorithm}

\section{Numerical Simulations}
In this section, we construct a virtual scenario to integrate a blockchain system with edge servers. The, we implement and evaluate our MIDA mechanism in detail.

\subsection{Simulation Setup}
We hypothesize an area with $\alpha\times \alpha$ $km^2$, where there are $n$ devices (blockchain nodes, buyers) and $m$ edge servers (sellers). We default by $n=m$ and they are distributed uniformly in this area. Given a device $d_i\in D$ and an edge server $s_j\in S$, their coordinates in this area are denoted by $(x_i,y_i)$ and $(x_j,y_j)$, which implies their positions. The distance between device $d_i$ and edge server $s_j$ can be defined as $Dist(d_i,s_j)$. We have
\begin{equation}
	Dist(d_i,s_j)=\sqrt{(x_i-x_j)^2+(y_i-y_j)^2}
\end{equation}
Here, we give a parameter $\gamma$ such that $\{d_i,s_j\}\in E$ if $Dist(d_i,s_j)\leq\gamma$, otherwise $\{d_i,s_j\}\notin E$. According to the definitions in Sec. \uppercase\expandafter{\romannumeral3}-B, in the time slot $t\in I$, we assume the unit cost $c_j^t$ of each seller $s_j\in S$ is distributed uniformly in interval $[0,1]$. Similarly, the unit valuation $v^t_{i,j}$ is also distributed uniformly in interval $[0,1]$ if $Dist(d_i,s_j)\leq\gamma$ and $v^t_{i,j}=0$ if $Dist(d_i,s_j)>\gamma$. The settings of computing power $r_i^t$ requested by buyer $d_i\in D$ and computing power $g_j^t$ provided by seller $s_j\in S$ will be introduced later.

\subsection{Simulation Results and Analysis}
\textbf{Part 1: }We consider a static MIDA mechanism in any time slot $t\in I$ based on the above settings, where we assume that $Dist(d_i,s_j)\leq\gamma$ and $r_i^t\leq g_j^t$ for any $d_i\in D$ and $s_j\in S$. Here, the $r_i^t$ is distributed uniformly in interval $[0,10]$. Moreover, the Constraint (8) has been satisfied. Here, we give $\alpha=10$ and $n=m=10$. Thus, we have $D=\{d_1,d_2,\cdots,d_{10}\}$ and $S=\{s_1,s_2,\cdots,s_{10}\}$. We use this simplified scene to evaluate whether it satisfies individual rationality, budget balance, and truthfulness.

\noindent
\textbf{(1.a) Individual Rationality and Budget Balance: }Fig. \ref{fig2} shows the auction results and their individual rationality, where the $d_9(s_7)$ means $\sigma^t(d_9)=s_7$. Shown as Fig. \ref{fig2}, we have $D_w^t=\{d_9,d_2,d_{10},d_3\}$ and $S_w^t=\{s_7,s_4,s_9,s_3\}$. The price charged to each winning buyer is less than its bid and the payment rewarded to each winning seller is more than its ask. Thus, individual rationality can be held. According to the definition of budget balance shown as Inequ. (\ref{eq9}), the budget balance can be held obviously.

\begin{figure}[!t]
	\centering
	\includegraphics[width=2.5in]{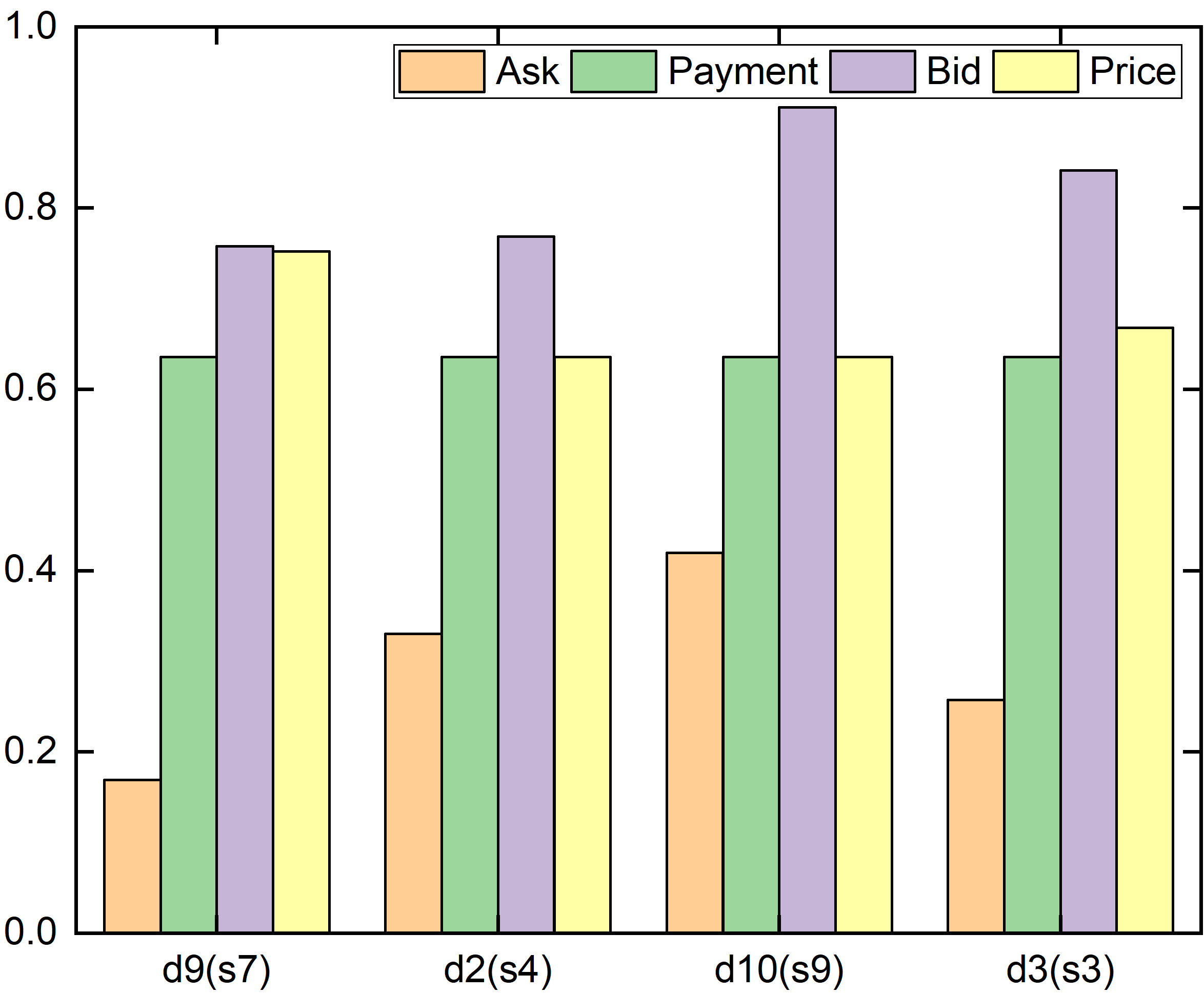}
	\centering
	\caption{The auction results obtained by MIDA and their individual rationality.}
	\label{fig2}
\end{figure}

\begin{figure}[!t]
	\centering
	\subfigure[Buyer $d_2\in D_w^t$]{
		\includegraphics[width=0.48\linewidth]{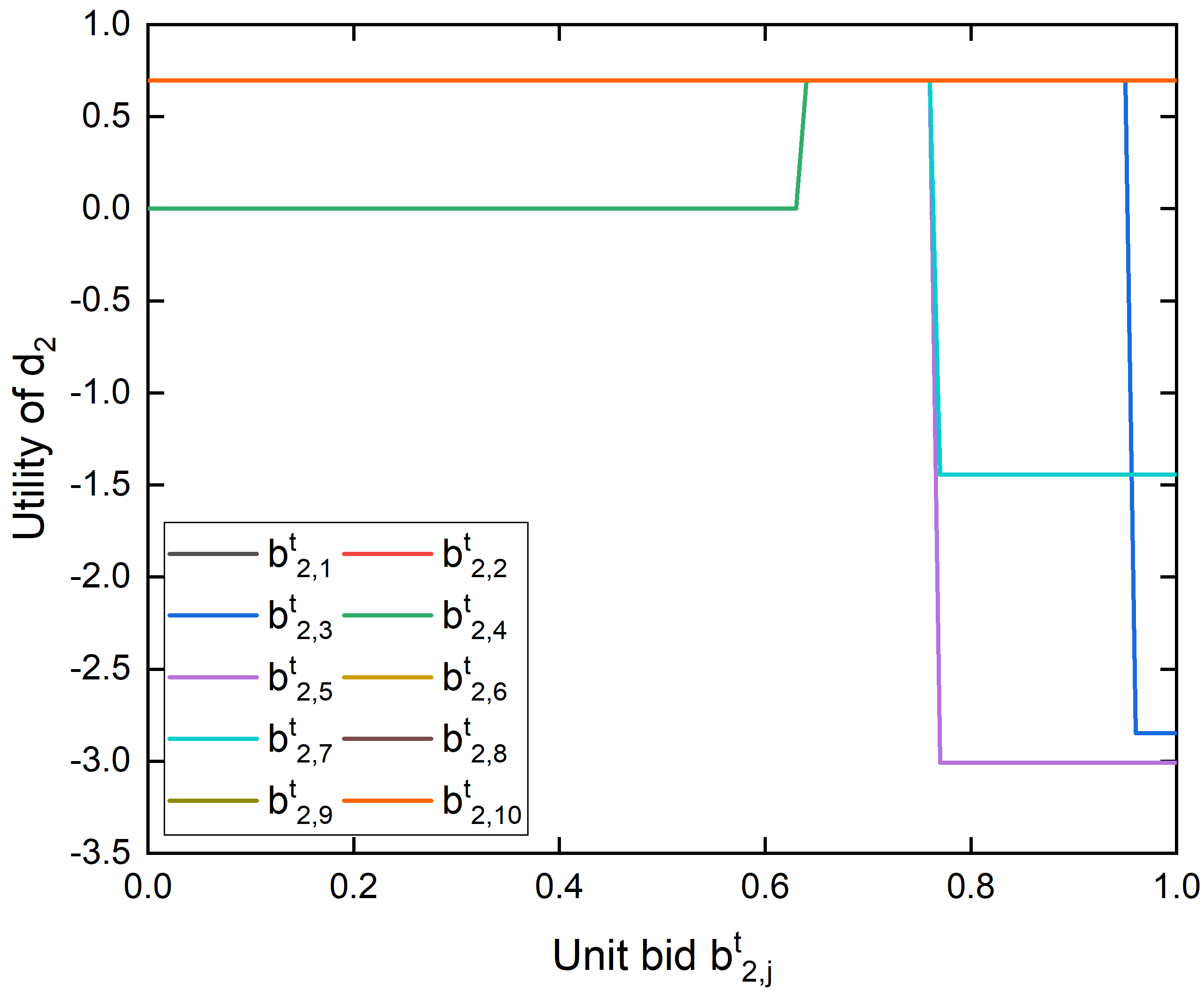}
	}%
	\subfigure[Buyer $d_7\notin D_w^t$]{
		\includegraphics[width=0.48\linewidth]{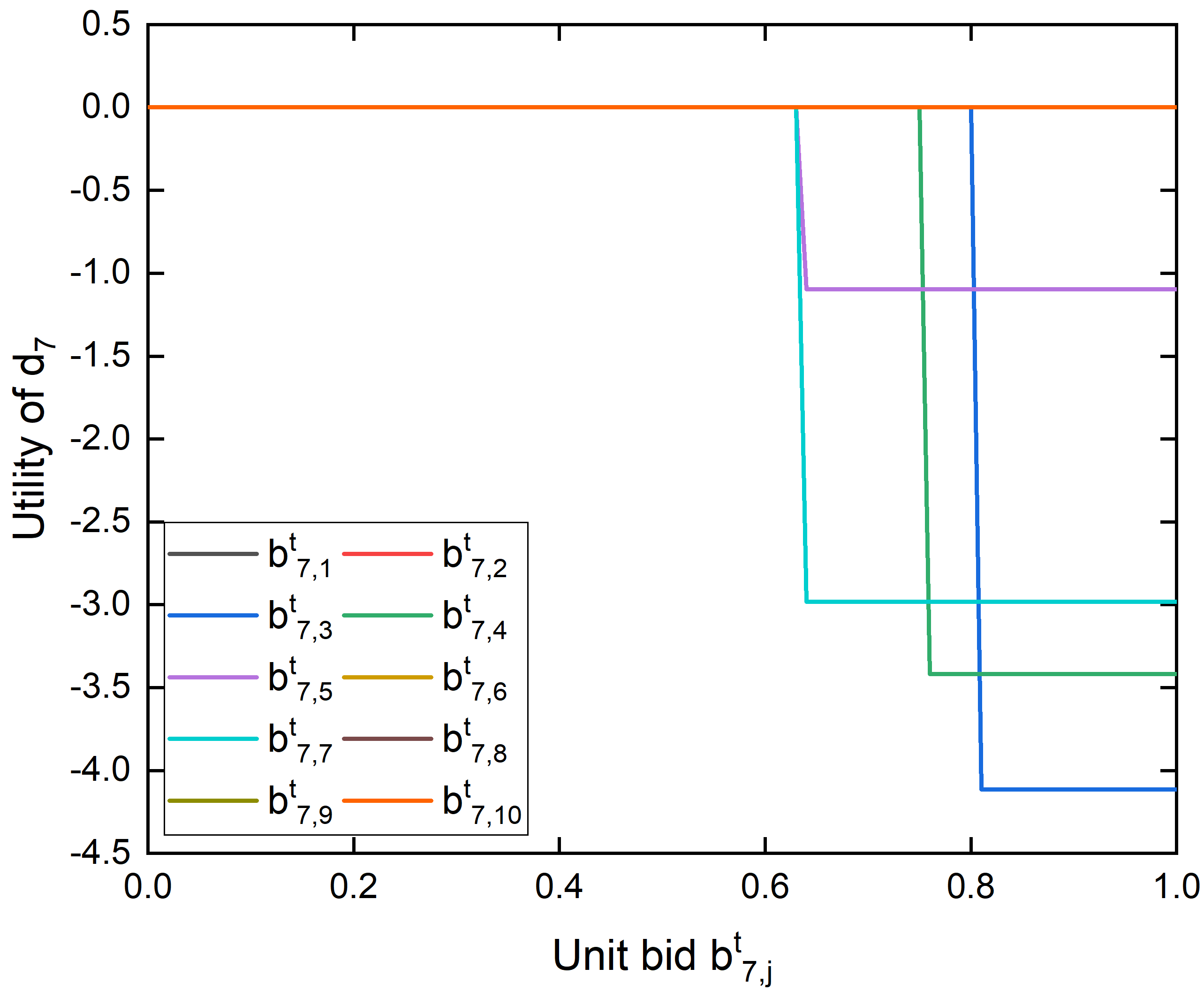}
	}%
	
	\subfigure[Seller $s_7\in S_w^t$]{
		\includegraphics[width=0.48\linewidth]{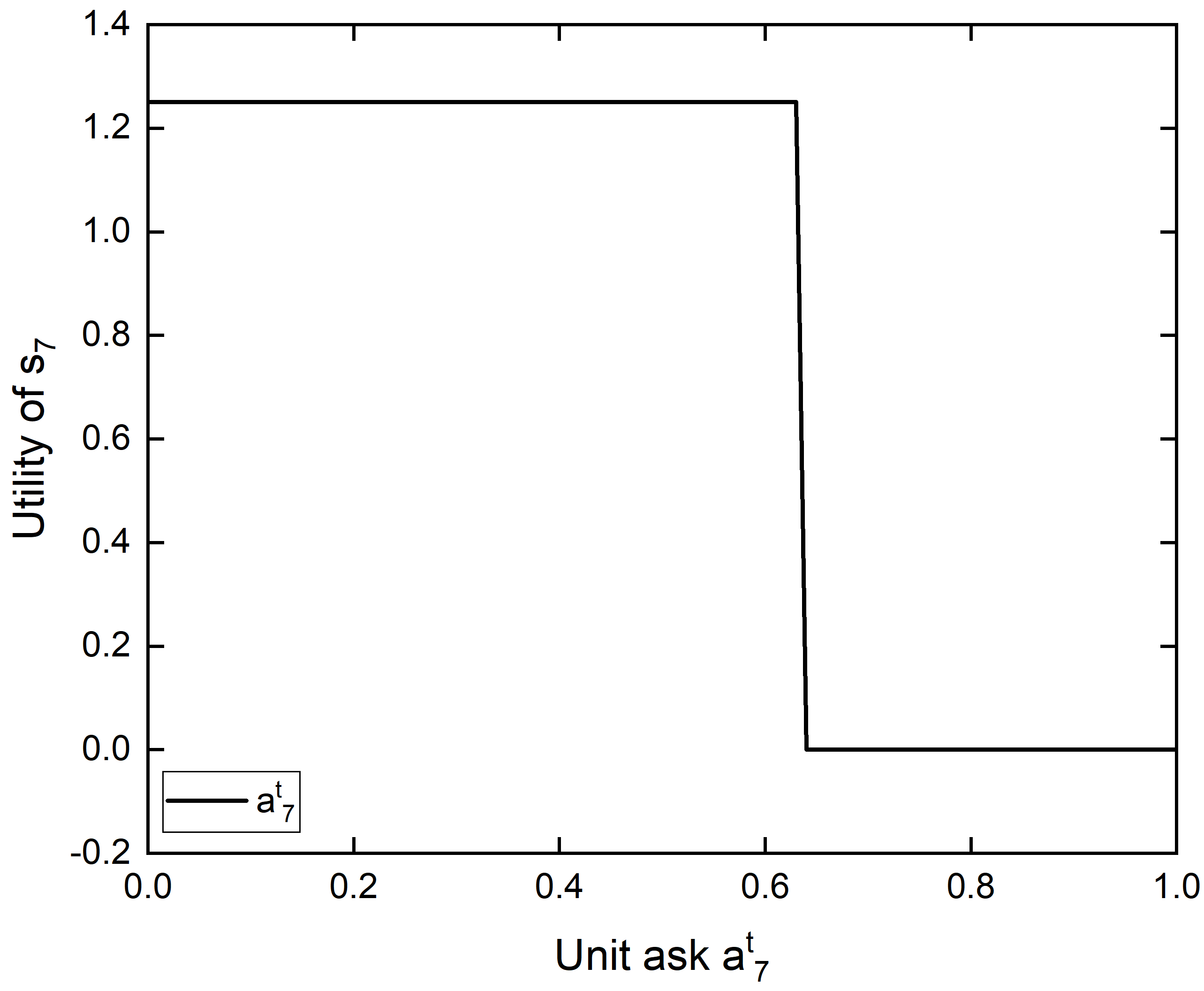}
	}%
	\subfigure[Seller $s_1\notin S_w^t$]{
		\includegraphics[width=0.48\linewidth]{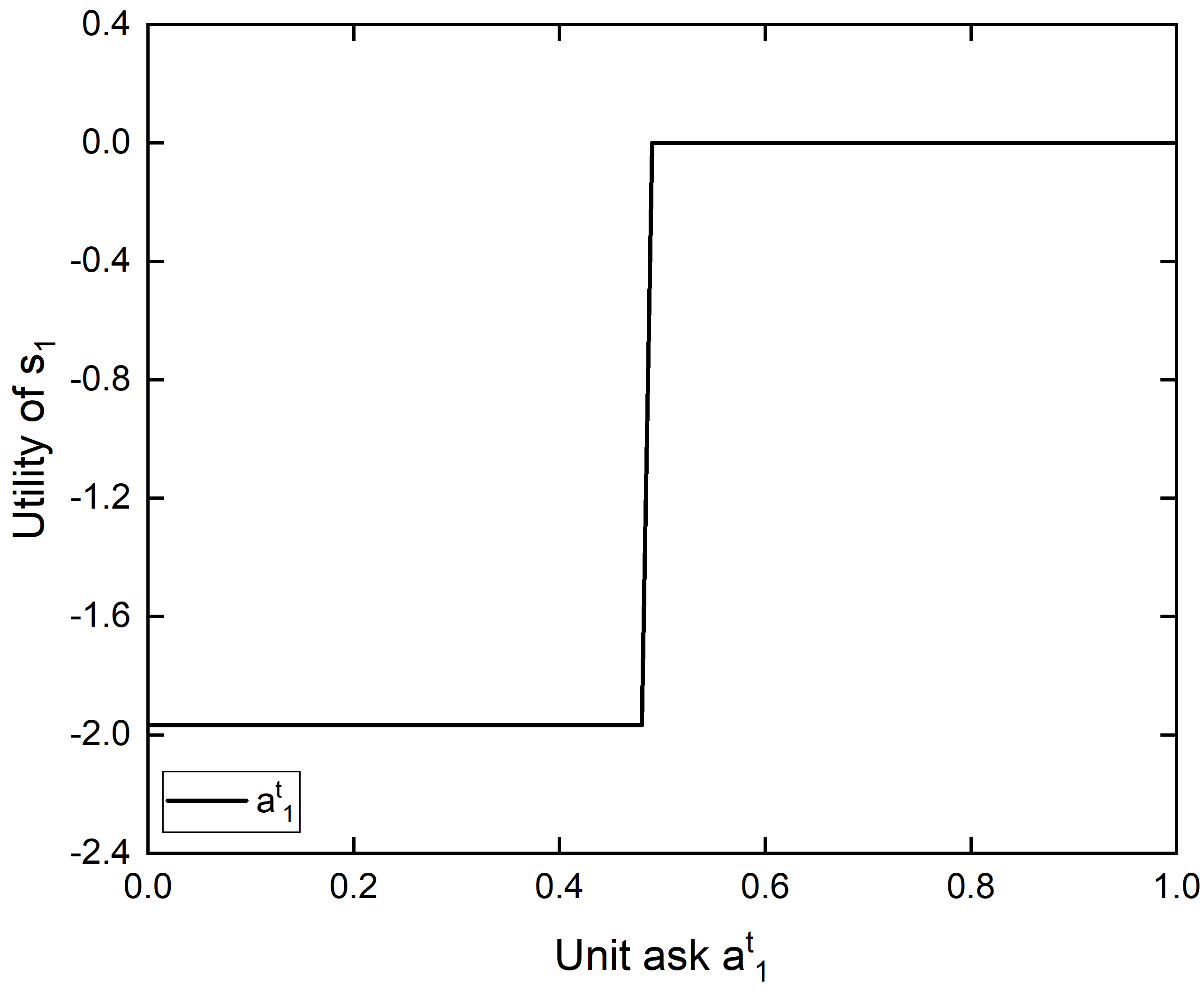}
	}%
	\centering
	\caption{The truthfulness of buyers and sellers in MIDA.}
	\label{fig3}
\end{figure}

\noindent
\textbf{(1.b) Truthfulness: }Fig. \ref{fig3} shows the truthfulness of buyers and sellers in MIDA, where we select the winning buyer $d_2\in D_w^t$, losing buyer $d_7\notin D_w^t$, winning seller $s_7\in S_w^t$, and losing seller $s_1\notin S_w^t$ as examples to demonstrate the truthfulness of MIDA. Shown as (a) in Fig. \ref{fig3}, each $b^t_{7,j}$ changes from 0 to 1. At this time, other unit bids except $b^t_{7,j}$ are equal to their corresponding valuations. For the buyer $d_2$, its utility is $\hat{u}^t_2=0.695$ when giving the truthful bid. We can see that it cannot improve its utility by giving an untruthful bid $b^t_{2,j}$ where $s_j\in S$, which even reduces its utility to negative value. For the buyer $d_7$, its utility is $\hat{u}^t_7=0$ when giving the truthful bid. Shown as (b) in Fig. \ref{fig3}, it cannot improve its maximum utility by giving an untruthful bid as well. Shown as (c) (d) in Fig. \ref{fig3}, it is obvious that sellers obtain their maximum utilities by giving truthful asks. Therefore, the truthfulness of buyers and sellers can be held definitely.

\begin{figure}[!t]
	\centering
	\includegraphics[width=2.5in]{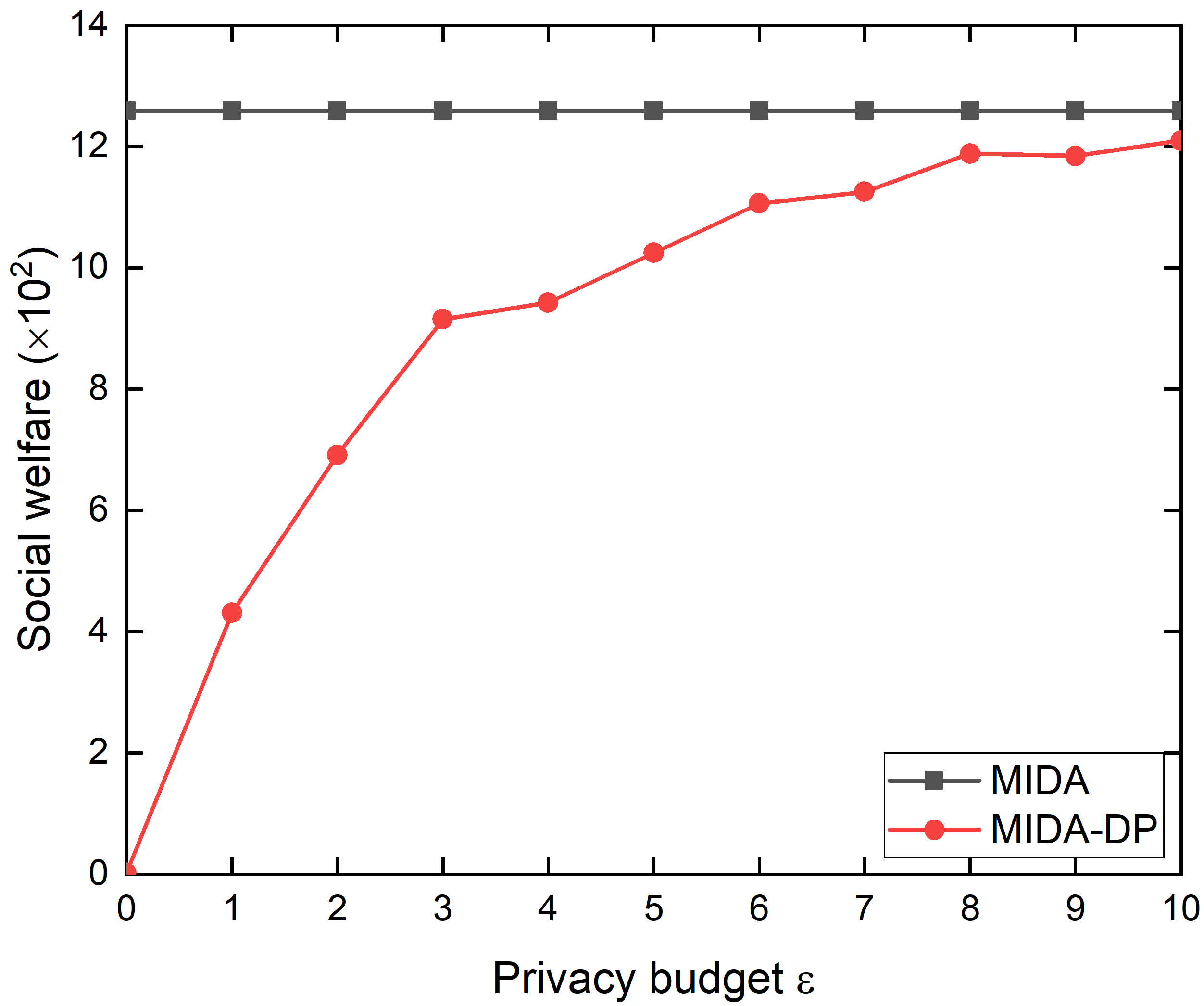}
	\centering
	\caption{The social welfare obtained by MIDA and MIDA-DP.}
	\label{fig4}
\end{figure}

\begin{figure}[!t]
	\centering
	\includegraphics[width=2.5in]{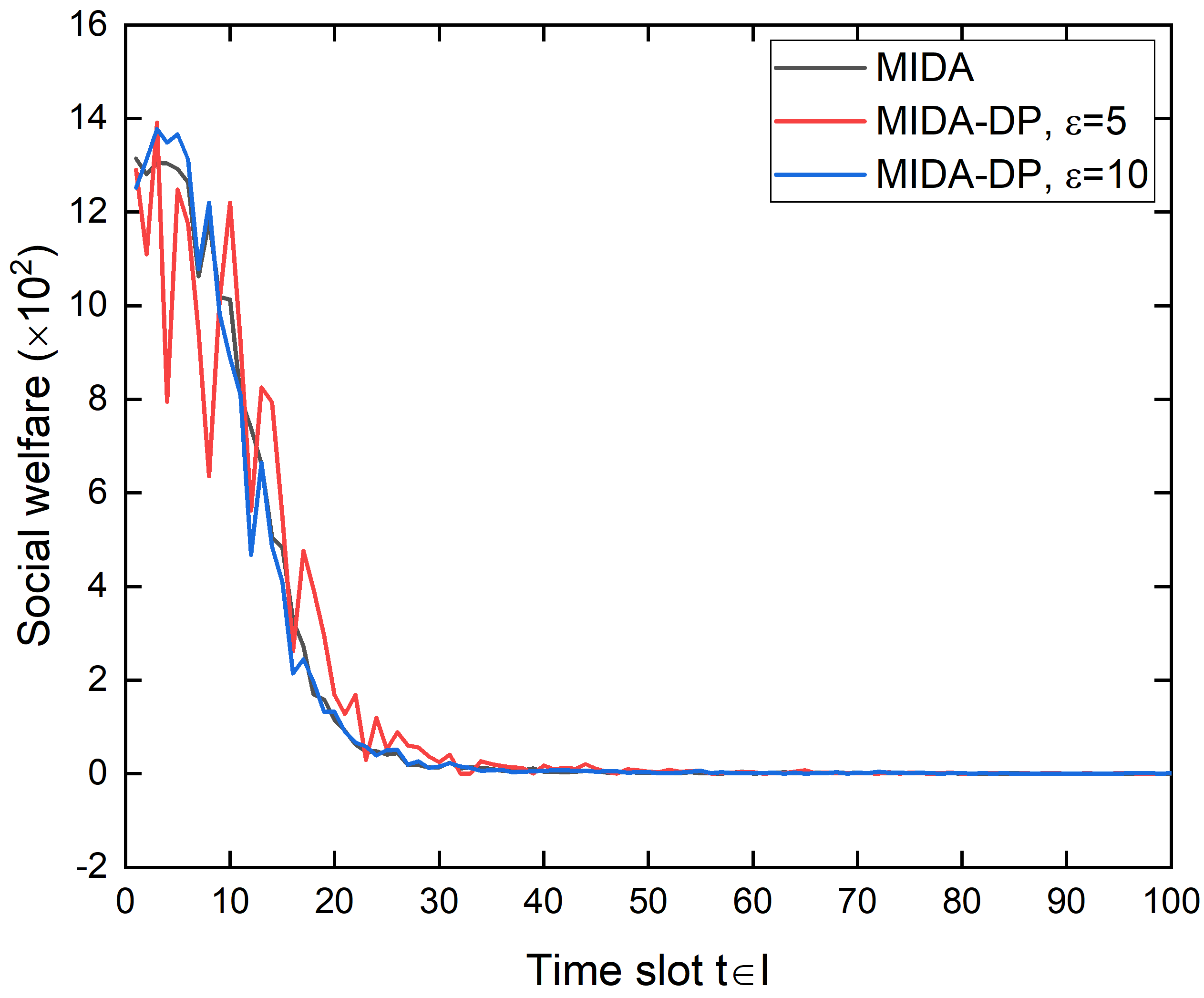}
	\centering
	\caption{The social welfare obtained by different online mechanisms.}
	\label{fig5}
\end{figure}

\textbf{Part 2: }We consider a static MIDA-DP mechanism in any time slot $t\in I$ based on the above settings, where we give $\alpha=1000$, $n=m=1000$, and $\gamma=50$. Here, the $r_i^t$ is distributed uniformly in interval $[0,10]$, $r_i^t\leq g_j^t$ for any $d_i\in D$ and $s_j\in S$, and the Constraint (8) has been satisfied. Then, we consider the online MIDA and MIDA-DP mechanisms in time interval $I$, where we give $\theta=30$. Thus, the Constraint (8) should be checked in each time slot. We use these two scenes to evaluate the performance of differential privacy and online mechanisms.

\noindent
\textbf{(2.a) Differential Privacy: }Fig. \ref{fig4} shows the social welfare obtained by MIDA and MIDA-DP with the different privacy budget $\varepsilon$, where we run the MIDA-DP mechansim $100$ times and take the average of them. Shown as Fig. \ref{fig4}, we can observe that the expected social welfare of MIDA-DP increases and approaches to the social welfare of MIDA (without differential privacy) gradually as its privacy budget increases. As we know, the larger the privacy budget is, the weaker the privacy protection is. Thus, we need to tradeoff the performance of social welfare and degree of privacy protection.

\noindent
\textbf{(2.b) Online Mechanisms: }Fig. \ref{fig5} shows the real-time social welfare obtain by online MIDA and MIDA-DP, where the time interval is $I=\{1,2,\cdots,100\}$. Shown as Fig. \ref{fig5}, the social welfare decreases gradually as the time slot increases. This is due to the constraint of $\theta$, and the permitted total computing resources in this time interval for some devices have been used up. Besides, we can see that the fluctuation of social welfare is larger when the privacy budget is smaller.

\textbf{Part 3: }We evaluate the MIDA-G mechanism where there are more than one device can be assigned to an edge server. Thus, we give that the $g^t_j$ is distributed in interval $[50,100]$. Similar to Part 2, we cansider a static MIDA-G-DP mechansim in any time slot $t\in I$ first, and then consider the online MIDA-G and MIDA-G-DP mechanisms in time interval $I$.

\begin{figure}[!t]
	\centering
	\includegraphics[width=2.5in]{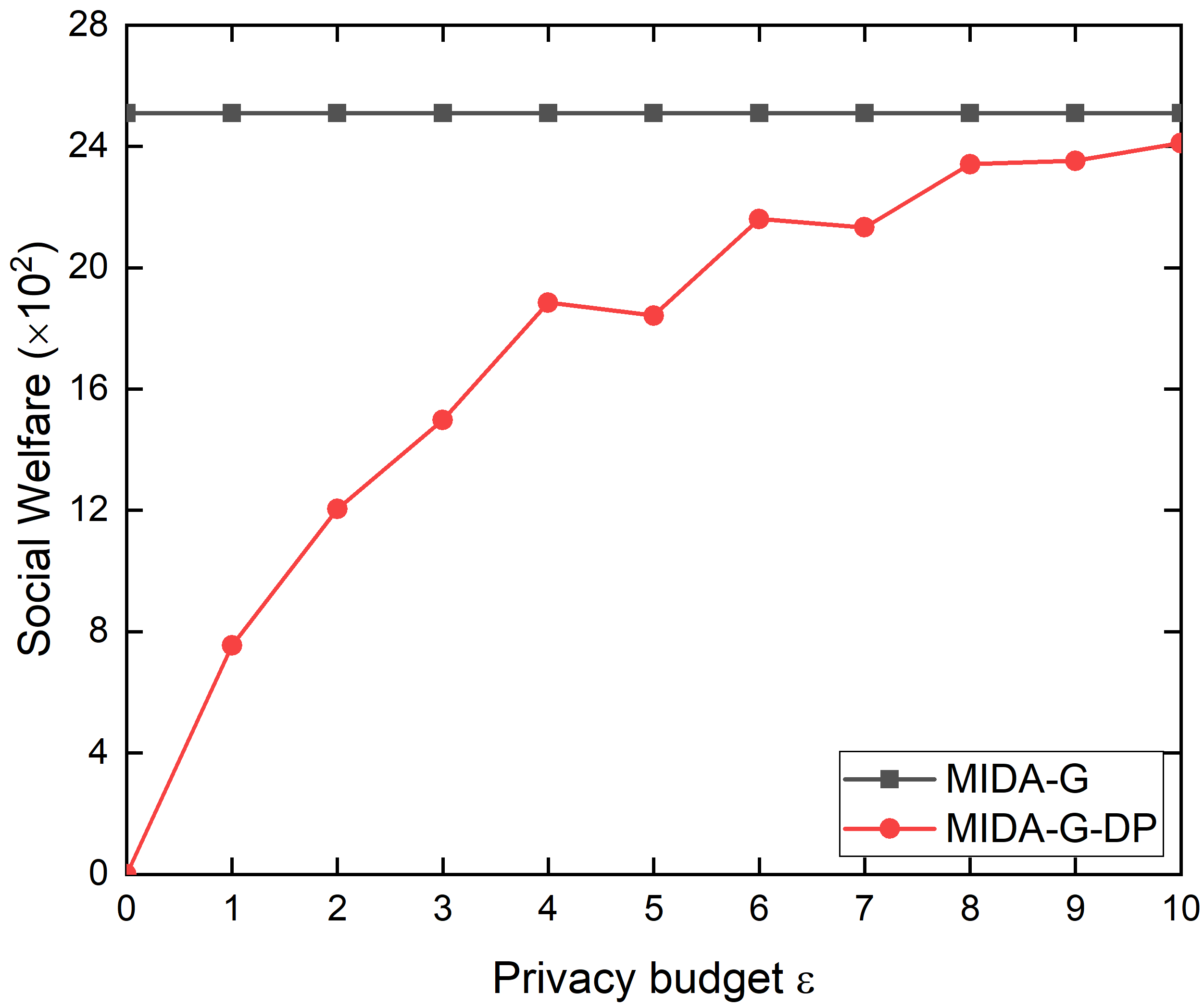}
	\centering
	\caption{The social welfare obtained by MIDA-G and MIDA-G-DP.}
	\label{fig6}
\end{figure}

\begin{figure}[!t]
	\centering
	\includegraphics[width=2.5in]{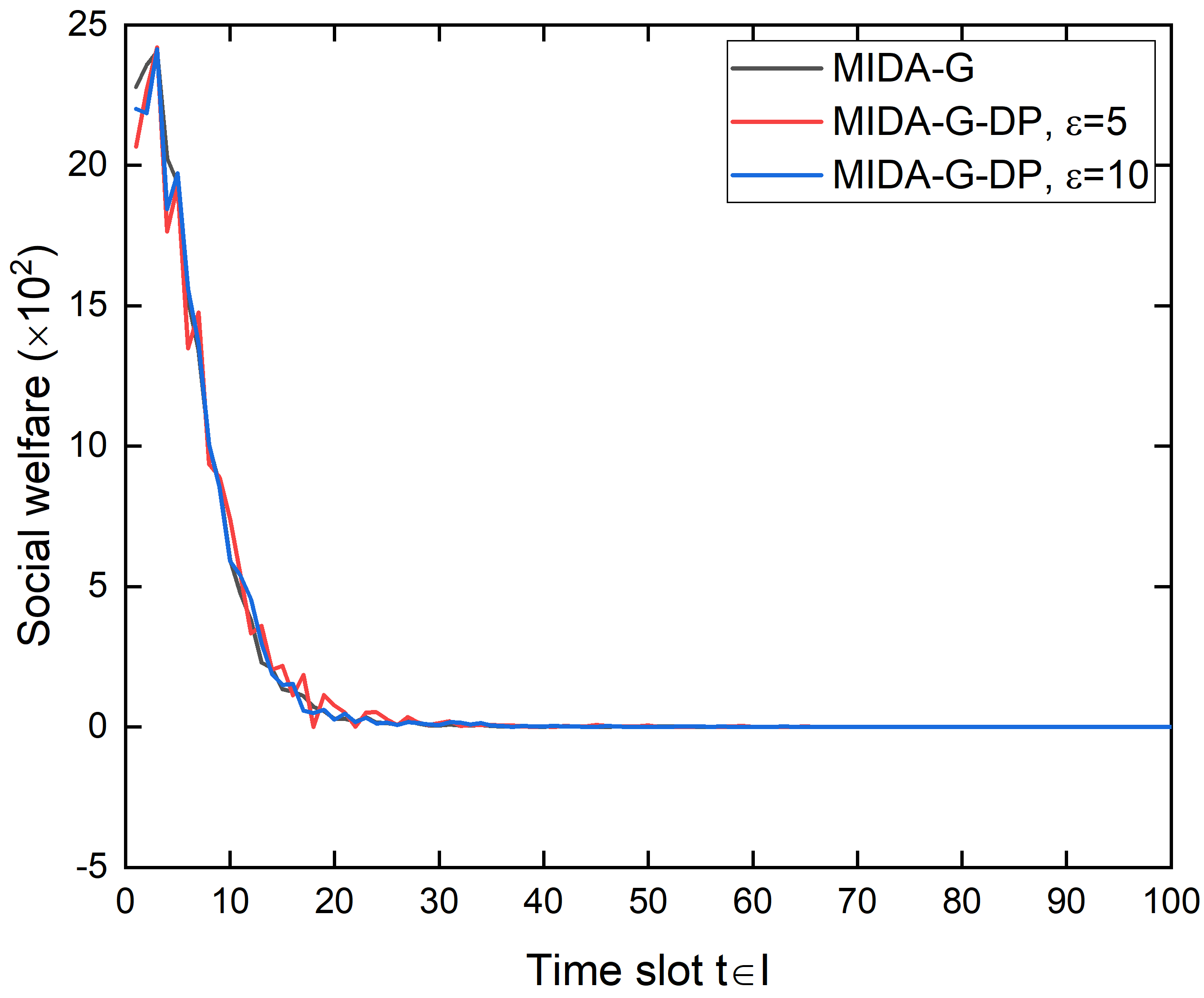}
	\centering
	\caption{The social welfare obtained by different online mechanisms.}
	\label{fig7}
\end{figure}

\noindent
\textbf{(3.a) Differential Privacy: }Fig. \ref{fig6} shows the social welfare obtained by MIDA-G and MIDA-G-DP with the different privacy budget $\varepsilon$, where we run the MIDP-G-DP mechanism $100$ times and take the average of them. According to our settings, we have $r_i^t<(1/5)\cdot g_j^t$ for any $d_i\in D$ and $s_j\in S$. Thus, there are at least five devices can be assigned to an edge server. Shown as Fig. \ref{fig6}, it shows a similar trend to MIDA-DP mechanism in Fig. \ref{fig4}. However, the social welfare of MIDA-G is improved significantly.

\noindent
\textbf{(3.b) Online Mechanisms: }Fig. \ref{fig7} shows the real-time social welfare obtained by online MIDA-G and MIDA-G-DP, where the time interval is $I=\{1,2,\cdots,100\}$. Shown as Fig. \ref{fig5}, it also shows a similar trend to MIDA and MIDA-DP in Fig. \ref{fig5}. Besides, the MIDA-G converges to zero more rapidly than the MIDA since much more devices can be assigned in earlier time slots, thus faster to reach the Constraint (8).

\section{Conclusion}
In this paper, we discussed a typical system model towards integrating blockchain and edge computing. First, we formulated the problem mathematically and modeled it by an online multi-item double auction mechanism. We designed a MIDA mechanism for a simplified special case and proved that it is individually rational, budget balanced, computationally efficient and truthful. Then, we analyzed system security and used the differential privacy to enhance security and privacy protection in order to prevent it from inference attack. Next, we proposed a MIDA-G mechanism changing our allocation from bijection to many-to-one, which is more general and realistic. Finally, we constructed a virtual scenario to test our mechanism by numerical simulations, which indicates the effectiveness and correctness of our auction algorithms and privacy protection.

\section*{Acknowledgment}

This work is partly supported by National Science Foundation under grant 1747818 and 1907472.

\ifCLASSOPTIONcaptionsoff
  \newpage
\fi

\bibliographystyle{IEEEtran}
\bibliography{references}

\begin{IEEEbiography}[{\includegraphics[width=1in,height=1.25in,clip,keepaspectratio]{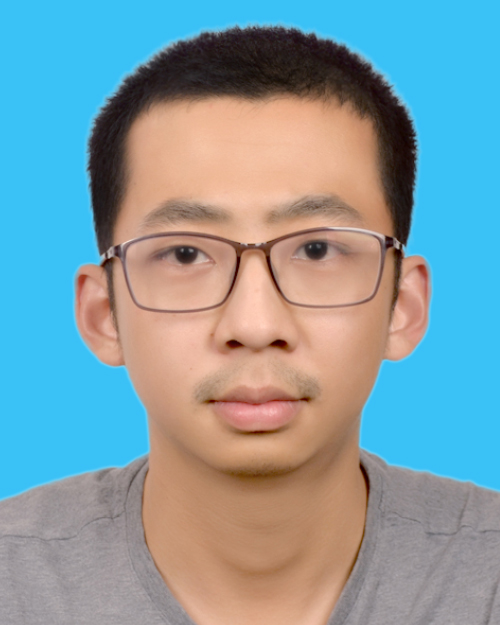}}]{Jianxiong Guo}
	is a Ph.D. candidate in the Department of Computer Science at the University of Texas at Dallas. He received his B.S. degree in Energy Engineering and Automation from South China University of Technology in 2015 and M.S. degree in Chemical Engineering from University of Pittsburgh in 2016. His research interests include social networks, data mining, IoT application, blockchain, and combinatorial optimization.
\end{IEEEbiography}

\begin{IEEEbiography}[{\includegraphics[width=1in,height=1.25in,clip,keepaspectratio]{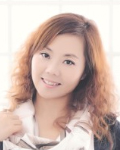}}]{Weili Wu}
	received the Ph.D. and M.S. degrees from the Department of Computer Science, University of Minnesota, Minneapolis, MN, USA, in 2002 and 1998, respectively. She is currently a Full Professor with the Department of Computer Science, The University of Texas at Dallas, Richardson, TX, USA. Her research mainly deals in the general research area of data communication and data management. Her research focuses on the design and analysis of algorithms for optimization problems that occur in wireless networking environments and various database systems.
\end{IEEEbiography}

\end{document}